\tikzstyle{r} = [draw, very thick, red,-]
\tikzstyle{g} = [draw, very thick, green, -]
\tikzstyle{d} = [draw, very thick,black, dashed]
\tikzstyle{b} = [draw, very thick, black, -]
\tikzstyle{a} = [draw, very thick, black, ->]
\tikzstyle{la} = [draw, thick, black, ->]
\tikzstyle{t} = [draw,thin,black,-]
\tikzstyle{blueline} = [draw, very thick, blue, -]
\tikzstyle{backa} = [draw, very thick, black, <-]
\def\ppsmatrix#1{\begin{psmallmatrix}#1\end{psmallmatrix}} 
\newcommand{\sfX}{{\sf X}}
\newcommand{\sfZ}{{\sf Z}}
\begin{document}
\title{Absorbing Sets in Quantum LDPC Codes}
\author{
  \IEEEauthorblockN{Kirsten D. Morris, Tefjol Pllaha, Christine A. Kelley}\\
    	\IEEEauthorblockA{\small University of Nebraska-Lincoln, Lincoln, NE, USA.\\ 
    	E-mail: kmorris11@huskers.unl.edu, tefjol.pllaha@unl.edu, ckelley2@unl.edu
	}
\thanks{Part of this work was presented at the 2023 International Symposium on Topics in Coding (ISTC)~\cite{10273573}.
K. M. thanks the support of the GFSD and Othmer fellowships. This work was also supported by a Simons Collaboration Grant.}
}

\IEEEoverridecommandlockouts

\maketitle

\begin{abstract}
Iterative decoder failures of quantum low density parity check (QLDPC) codes are attributed to substructures in the code’s graph, known as trapping sets, as well as degenerate errors that can arise in quantum codes. 
Failure inducing sets are subsets of codeword coordinates that, when initially in error, lead to decoding failure in a trapping set.
The purpose of this paper is to examine failure inducing sets of QLDPC codes under syndrome-based iterative decoding.
As for classical LDPC codes, we show that absorbing sets play a central role in understanding decoder failures.
Raveendran and Vasić~\cite{raveendran2021trapping} initiated the study of quantum trapping sets, where beyond the classical-type trapping sets, they identified rigid symmetric structures (a.k.a symmetric stabilizers) responsible for degenerate errors. 
In this paper, we show that this behavior is part of a much more general phenomenon that can be described by the absorbing set framework.

\end{abstract}

\section{Introduction}

Quantum information is far from perfect and very much prone to errors. For this reason, fault-tolerant quantum computation is a must and quantum error correction becomes a central topic.
Since the discovery of the first quantum error correction code~\cite{Shor-first}, there has been tremendous progress in code design.
Many of the existing quantum codes leverage the vast existing literature in classical coding theory. Low density parity check (LDPC) codes are well-established in classical coding theory. 
Their quantum analogues, quantum LDPC (QLDPC) codes, gained popularity due to Gottesman's breakthrough discovery~\cite{Gottesman-overhead}, showing that constant overhead can be achieved with constant encoding rate.
For this reason, QLDPC codes are current candidates for realizing scalable fault-tolerant quantum computation.  
Like their classical counterparts, QLDPC codes are amenable to low complexity iterative decoding algorithms, such as syndrome-based iterative decoding  \cite{RBV19, RRPV22}.
In practice, these algorithms are run until either an estimated error pattern is obtained or a maximum number of iterations is reached. However, these algorithms are suboptimal on finite length QLDPC codes, meaning that they do not always produce the correct estimated error pattern. Instead, the decoder may fail to converge or output an erroneous syndrome, or it may estimate an error pattern with the same syndrome that differs from the actual pattern by a logical operator.

Iterative decoders may be viewed as graph-based algorithms that operate on the code's Tanner graph, which is the graphical representation of the code's parity check matrix. Failure of iterative decoding of LDPC codes has been attributed to graphical substructures, called {\em trapping sets}, in the Tanner graph \cite{R03}.  These structures contribute to persistent error floors in the Bit Error Rate (BER) or Frame Error Rate (FER) curves of these codes.  Moreover, these structures naturally depend on the choice of Tanner graph representation used in the decoding process. Failure inducing sets of a trapping set are subsets of the codeword coordinates that, when initially in error, lead to  a decoding failure. While most work on decoder failure of QLDPC codes has focused on defining and identifying trapping sets of QLDPC codes \cite{raveendran2021trapping}, less has characterized  failure inducing sets of trapping sets.  This work aims to identify classes of failure inducing sets, and takes initial steps at predicting the type of error that results.

Iterative decoder failure of classical LDPC codes on different channels is  attributed to graphical substructures, such as stopping sets, trapping sets, and absorbing sets \cite{Di02, R03, dolecek07}. 
We observed that many failure inducing sets of QLDPC trapping sets were in fact absorbing sets. 
Since absorbing sets and trapping sets are closely related in structure, it is natural to explore the role of absorbing sets in syndrome-based iterative decoder failures of QLDPC codes. 
In this paper we examine the connection between absorbing sets, as they are defined for classical LDPC codes, and trapping sets, and identify cases when absorbing sets are trapping sets and failure inducing sets.

Degenerate errors are unique to quantum error correction. 
Since such errors have the same syndrome, a syndrome-based decoder might attempt to converge at any such error thereby oscillating from one to another and ultimately diverging.
In~\cite{raveendran2021trapping}, such behavior was attributed to certain graphical structures, called symmetric stabilizer; see Definition~\ref{D-ss}.
The main structural features of a symmetric stabilizer are that (1) it can be ``partitioned" as an even number of isomorphic substructures and (2) said substructures have identical sets of odd degree check nodes.
In a series of results and examples, we show that neither of the restrictions are necessary.
Moreover, we show that the framework of~\cite{raveendran2021trapping} can be explained in terms of absorbing sets.

This paper is organized as follows. In Section II, we introduce the necessary notation and background on quantum stabilizer codes, QLDPC codes and their graph representation, and syndrome-based iterative decoding.  
In Section III we examine graph structures that affect decoder performance. 
In Section IV we analyze absorbing sets and identify cases when absorbing sets are  trapping sets with respect to the syndrome decoder, and when they are failure inducing sets. 
In Section V we extend these results to absorbing sets embedded within larger graphs and crystallize a general framework that incorporates the symmetric stabilizer framework.
This section also includes a case study using hypergraph-product codes. 
We conclude the paper in Section VI with remarks for future work.
\section{Preliminaries}

\subsection{Stabilizer formalism} 
Stabilizer codes~\cite{Gottesman-phd97} are quantum codes obtained as the simultaneous eigenspace of commuting Pauli matrices. Specifically, the Pauli group acting on one physical qubit $\C^2$, denoted $\cP_1$, is generated by ${\sf I}_2 = \ppsmatrix{1&0\\0&1}, {\sf X} = \ppsmatrix{0&1\\1&0}, {\sf Z} = \ppsmatrix{1&0\\0&-1}$, and ${\sf Y} = i{\sf XZ}$. 
It is convenient to think of the Pauli matrices in terms of the binary representation ${\sf I}_2 \to (0,0)$, ${\sf X} \to (1,0)$, ${\sf Z} \to (0,1)$, and ${\sf Y} \to (1,1)$. 
The Pauli group acting on $n$ physical qubits $\C^{\otimes n}\cong\C^{2^n}$, denoted $\cP_n$, is then naturally the $n$-fold Kronecker product of $\cP_1$.
A stabilizer group $\cS\leq \cP_n$ is an abelian group that does not contain $-{\sf I}_{2^n}$. 
A stabilizer with $k$ independent generators can be naturally represented as an $k\times 2n$ matrix $H_\cS = \ppsmatrix{H_{\sf X}&\mid&H_{\sf Z}}$, and it defines an $[\![n,n-k]\!]$ quantum code. 
Two stabilizer generators commute if and only if their respective representations $h = (h_{\sf X}, h_{\sf Z}), g = (g_{\sf X},g_{\sf Z})$ of $H$ are orthogonal with respect to the {\em symplectic inner product} $h\odot g = h_{\sf X}g_{\sf Z}^T + h_{\sf Z}g_{\sf X}^T$.
Cumulatively, this leads to
\begin{equation}\label{e-sip}
H_\cS\odot H_\cS:=H_{\sf X}H_{\sf Z}^T + H_{\sf Z}H_{\sf X}^T = 0.
\end{equation}
The {\em logical operators} acting on the code space correspond to the elements of the Pauli group that commute with $\cS$. Elements of $\cS$ commute with each other so they are naturally (trivial) logical operators.

An important class of stabilizer codes are the Calderbank-Shor-Steane (CSS) codes~\cite{CS96,Steane96}, defined by a pair of classical linear codes $C_{\sf X}, C_{\sf Z}\subset \Fq^n$ such that $C_{\sf X}^\perp\subset C_{\sf Z}$. 
This condition forces two respective parity check matrices $H_{\sf X}$ and $H_{\sf Z}$ to satisfy $H_{\sf Z}H_{\sf X}^T = 0$ and thus the matrix
\[
H = \left(\!\!\begin{array}{c|c}H_{\sf X}&0\\0&H_{\sf Z} \end{array}\!\!\right)
\]
satisfies~\eqref{e-sip} and it defines a stabilizer code. 

\subsection{QLDPC codes and syndrome-based iterative decoding}
Low density parity check (LDPC) codes are codes characterized by having sparse parity check matrix representations.
Given a parity check matrix $H$ of an LDPC code, its bipartite Tanner graph representation is the graph $\cG = (V,W; E)$
where the vertex sets $V$ and $W$ correspond to the codeword coordinates and the parity check equations, respectively, and $E$ is the set of edges. Vertices in $V$ and $W$ are called variable and check nodes, respectively. For $v_i \in V$ and $c_j \in W$, the edge $(v_i,c_j)\in E$ if and only if $h_{j,i} = 1$ in $H$. The sparsity of $H$ ensures that the graph is sparse, making it amenable to low complexity iterative decoders. Indeed, the complexity of iterative decoders is linear in the number of edges \cite{T81}.

Tanner graphs are defined similarly for quantum CSS codes. However, since the parity check equations can be partitioned into those that have nonzero entries in  $H_{\sf X}$ and those that have nonzero entries in $H_{\sf Z}$, the variable nodes have two edge types, those determined by ${\sf X}$ errors and those determined by ${\sf Z}$ errors, and each check node is incident to only one edge type.

Typically, for stabilizer codes and quantum codes in general there is a correlation between ${\sf X}$ and ${\sf Z}$ errors. However, for CSS codes we can ignore such correlation~\cite{mackay2004sparse} and treat them over two independent binary symmetric channels.
Let $e = (e_\sfX,e_\sfZ)$ be the binary representation of a Pauli error acting on $n$ qubits. The corresponding error syndrome captures the commutativity/orthogonality relations of the error with each of the stabilizers, that is,
\begin{align*}
\sigma_e & = (\sigma_\sfX, \sigma_\sfZ)
= \left(\!\!\begin{array}{c|c}H_{\sf X}&0\\0&H_{\sf Z} \end{array}\!\!\right)\odot e 
= (H_\sfZ e_\sfX^T, H_\sfX e_\sfZ^T).
\end{align*}
Thus, $H_{\sf X}$ can be used to decode ${\sf Z}$ errors and $H_{\sf X}$ can be used to decode ${\sf Z}$ errors.
An all-zero syndrome indicates that the error $e$ commutes with all the stabilizers and thus it is undetectable. 
If $f$ is itself a stabilizer, that is, it belongs to the rowspace of $H$, then $\sigma_f = \vec{0}$. It follows that for any Pauli error $e$ we have $\sigma_{e+f} = \sigma_e$. This means that decoding can be only performed up to stabilizers. Two Pauli errors $e, f$ are called {\em degenerate errors} if they yield the same syndrome $\sigma_e = \sigma_f$, or equivalently, if $e+f$ is a stabilizer. Such errors have no classical analog.

The goal of a syndrome-based decoder is to match the input syndrome. Specifically, the decoder outputs estimated errors $\hat{e}$ whose syndrome $\sigma_{\hat{e}}$ matches the input syndrome $\sigma_e$.
Once the syndromes are matched, the estimated error is applied to correct the error introduced by the channel.
Error correction fails if the decoder fails to match the syndrome or if there is a mis-correction, that is, the decoder produces a logical error. In particular, a logical error occurs if $e+\hat{e}$ is not a stabilizer.

To summarize, the possible outcomes of syndrome decoding are the following. 
If the estimated syndrome $\sigma_{\hat{e}}$ matches the input syndrome $\sigma_e$ and $\hat{e}=e$, the decoder recovered the exact error pattern. If $\sigma_{\hat{e}}$ matches $\sigma_e$ and $e +\hat{e}$ is in the rowspace of $H$, the decoder recovered a degenerate error $\hat{e}$, which is still considered successful decoding. Decoding failure occurs if there is a logical error, meaning $\sigma_{\hat{e}}$ matches $\sigma_e$ but $e + \hat{e}$ is not a stabilizer. Decoding failure also occurs if the estimated syndrome $\sigma_{\hat{e}}$ never matches the input syndrome $\sigma_e$. This can occur either when the estimated syndrome oscillates and never converges to the correct syndrome, or if the estimated syndrome converges to $\sigma_{\hat{e}}$ which does not match $\sigma_e$.

We now present the Gallager-B Syndrome-based Iterative Decoding Algorithm over the Binary Symmetric Channel (BSC). Let $e=(e_1, e_2, \dots, e_n) \in \mathbb{F}_2^n$ denote an error pattern, where $e_i = 1$ if variable node $v_i$ is in error, and $e_i = 0$ otherwise. Thus, $e$ is an incidence vector of the error locations. The neighboring checks of the variable nodes are either satisfied or unsatisfied, where an unsatisfied check means the incoming messages sum to $1\pmod 2$ and a satisfied check means the incoming messages sum to $0\pmod 2$. These check node values correspond to the input syndrome $\sigma=(\sigma_1, \sigma_2, \dots, \sigma_k)$.

For decoding, an all-zero error pattern is initially assumed. That is, all outgoing message symbols $\hat{e}_i$ are set to $0$. The outgoing check node message over an edge is computed as the XOR of extrinsic variable node messages and syndrome input value. The outgoing variable node message is the majority value among incoming extrinsic check node messages. If there is a tie, then the value of $0$ is sent, since a low weight error pattern is assumed.
The error pattern at the $\ell^{th}$ iteration, denoted $\hat{e}^{\ell}$, is determined to be the majority among \textit{all} incoming check node values at each variable node. If there is a tie, there is assumed to be no error.
The output syndrome value for the $i^{th}$ check node $c_i$ in the $\ell^{th}$ iteration is   
    $$\hat{\sigma}_i^{\ell} := \sum_{j \in \mathcal{N}(c_i)} \hat{e}_j \pmod 2$$
where the sum is taken over all incoming messages $\hat{e}_j$ in the neighborhood of the $c_i^{th}$ check node and is computed modulo 2. A check node $c_i$ is matched if and only if $\hat{\sigma}_i=\sigma_i$. If all syndrome values are matched, the iterative decoder outputs the error pattern $\hat{e}$. If not, the decoder repeats the previous steps.

\section{Iterative decoder failure and trapping sets}

In this section, we provide backgound on trapping sets, failure inducing sets, and absorbing sets, and show via examples how they affect iterative decoder performance. Moreover, we also illustrate the different types of decoder outcomes that can happen under syndrome-based iterative decoding.

Given a Tanner graph $\cG = (V,W; E)$ and a subset $S$ of $V$, let $\mathcal{N}(S)$ denote the set of check nodes that are incident to vertices in $S$. Let $\cG_S = (S,W_S;E_S)$, where $W_S = \mathcal{N}(S)$, denote the subgraph induced by $S \cup W_S$ in $\cG$. Thus,   $E_S$ is the set of edges in $\cG$ that have one vertex in $S$ and the other in $W_S$. Let $\mathcal{C}$ be a binary LDPC code of length $n$ with associated Tanner graph $\cG$, to be decoded with some chosen hard- or soft-decision decoder. Suppose that the codeword $\mathbf{x}$ is transmitted, and $\mathbf{y}$ is received. Let $\mathbf{y}^{\ell}$ be the output after $\ell$ iterations of the decoder are run on $\cG$, with  input syndrome $\sigma=(\sigma_1, \sigma_2, \dots, \sigma_k)$. Let $[n]$ denote the set $\{1, 2, \dots, n\}$. 

\begin{defi}
For $i \in [n]$, a variable node $v_{i}$ is {\em eventually correct} if there exists $L\in \mathbb{Z}_{\geq 0}$ such that $y_{i}^{\ell}=x_{i}$ for all $\ell\geq L$. Similarly, for $j \in [k]$, a check node $c_j$ is {\em eventually correct} if there exists $L\in \mathbb{Z}_{\geq 0}$ such that $\hat{\sigma_{j}}^{\ell}=\sigma_{j}$ for all $\ell\geq L$. 
\end{defi}

\begin{defi} For $i \in [n]$, a variable node $v_{i}$ {\em eventually converges} if there exists $L\in \mathbb{Z}_{\geq 0}$ such that $\hat{e}_{i}^{\ell}=\hat{e}_{i}^{(\ell + 1)}$ for all $\ell\geq L$.  Similarly, for $j \in [k]$, a check node $c_j$ {\em eventually converges} if there exists $L\in \mathbb{Z}_{\geq 0}$ such that $\hat{\sigma_{j}}^{\ell}=\hat{\sigma}_{j}^{(\ell+1)}$ for all $\ell\geq L$. 
\end{defi}

Note that if  variable node $v_i$ eventually converges, it may or may not converge to the correct estimate, $e_i$.

\begin{defi} A (quantum) trapping set  for a syndrome-based iterative decoder is a non-empty set  of variable nodes $\mathcal{T}$ in a Tanner graph $\cG$ such that there is a subset of variable nodes $\mathcal{F} \subseteq \mathcal{T}$ that when initially in error result in some subset of check nodes of $\mathcal{N}(\mathcal{T})$ that are not eventually correct and/or some subset of variable nodes of $\mathcal{T}$ that do not eventually converge. Such a subset $\mathcal{F}$ of variable nodes that when initially in error result in a trapping set $\mathcal{T}$ is called a {\em failure inducing} set for $\cT$. If the induced subgraph $\cG_\cT$ has $a$ variable nodes and $b$ odd degree check nodes, then $\mathcal{T}$ is said to be an {\em $(a,b)$-trapping set}. 
\end{defi}

Although $\cG_\cT$ is induced by $\cT \cup \cN(\cT)$, the graph $\cG_\cT$ is often referred to in the literature as a trapping set (TS) induced subgraph. To analyze decoder failure, we follow the convention of assuming messages outside the trapping set are correct.

\begin{defi}  
The {\em critical number} of a trapping set $\mathcal{T}$, denoted $\mu(\cT)$, is the smallest number of variable nodes in a failure inducing set for $\mathcal{T}$. The {\em strength} of a trapping set $\mathcal{T}$ is the number of failure inducing sets of cardinality $\mu$.
\end{defi}
\begin{figure}[!h]
\centering
\resizebox{0.48\textwidth}{!}{
    \centering
    \subfigure[\empty]{
         \begin{tikzpicture}[thick,scale=.8]
\node[circle,draw=black,label={above:\small{$v_1$}}] (v1) at (-2,2){};
\node[circle,draw=black,label={above:\small{$v_2$}}] (v2) at (2,2){};
\node[circle,draw=black,label={below:\small{$v_3$}}] (v3) at (2,-2){};
\node[circle,draw=black,label={below:\small{$v_4$}}] (v4) at (-2,-2){};

\node[shape=rectangle,draw=black,label={above:\small{$c_1$}}] (c1) at (0,2){};
\node[shape=rectangle,draw=black,label={right:\small{$c_2$}}] (c2) at (2,0){};
\node[shape=rectangle,draw=black,label={below:\small{$c_3$}}] (c3) at (0,-2){};
\node[shape=rectangle,draw=black,label={left:\small{$c_4$}}] (c4) at (-2,0){};
\node[shape=rectangle,draw=black,label={above:\small{$c_6$}}] (c6) at (0,0){};
\node[shape=rectangle,draw=black,label={below:\small{$c_5$}}] (c5) at (1.5,1.5){};
\node[shape=rectangle,draw=black,label={right:\small{$c_7$}}] (c7) at (-1.5,-1.5){};

\path [-,thick] (v1) edge node[left] {} (c1); \path [-,thick] (c1) edge node[left] {} (v2);
\path [-,thick] (v1) edge node[left] {} (c4); \path [-,thick] (c4) edge node[left] {} (v4);
\path [-,thick] (v4) edge node[left] {} (c3); \path [-,thick] (c3) edge node[left] {} (v3);
\path [-,thick] (v2) edge node[left] {} (c2); \path [-,thick] (c2) edge node[left] {} (v3);
\path [-,thick] (v1) edge node[left] {} (c6); \path [-,thick] (c6) edge node[left] {} (v3);
\path [-,thick] (v2) edge node[left] {} (c5);
\path [-,thick] (v4) edge node[left] {} (c7);

\end{tikzpicture}
         }
        \hfill
         \subfigure[\empty]
         {\raisebox{20mm}{
         \begin{tabular}{c|c|c}
    \toprule
    \textbf{Failure Inducing} & \textbf{VNs not} & \textbf{CNs not} \\
    \textbf{Set} & \textbf{Eventually Converged} & \textbf{Eventually Satisfied}\\
    \midrule
    $\{v_1, v_2, v_3, v_4\}$ & None & $\{c_5, c_{7}\}$ \\
    $ \{v_1, v_2, v_4\}$ & $\{v_1, v_2, v_4\}$ & $\{c_1, c_2, c_3, c_4, c_5, c_6, c_{7}\}$ \\
    $ \{v_2, v_3, v_4\}$ & $\{v_2, v_3, v_4\}$ & $\{c_1, c_2, c_3, c_4, c_5, c_6, c_{7}\}$ \\
    $\{v_1, v_3, v_4\}$ & None & $\{c_1, c_2, c_5, c_{7}\}$ \\
    $\{v_1, v_2, v_3\}$ & None & $\{c_3, c_4, c_5, c_{7}\}$ \\
    \bottomrule
  \end{tabular}
         }
         }
         }
\vspace{-.05 in}\caption{Left: A graph induced by a $(4,2)$-trapping set $\mathcal{T}$. Right: Failure inducing sets of $\mathcal{T}$.}
\label{fig:TSexample1}
\end{figure}
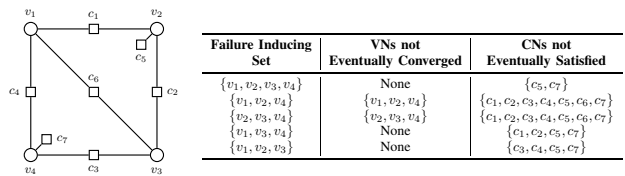

\begin{exa} Figure ~\ref{fig:TSexample1} shows a subgraph induced by a $(4,2)$-trapping set $\mathcal{T} = \{v_1,v_2,v_3,v_4\}$. $\mathcal{T}$ has five failure inducing sets, shown in 
the adjacent table,
along with their corresponding variable node and check node sets that do not eventually converge or are not eventually satisfied, respectively. Observe that $\mu(\mathcal{T}) = 3$ and the strength of $\mathcal{T}$ is four. Since each failure inducing set leads to check nodes that are not eventually satisfied, the corresponding type of error is a mismatched syndrome.  \hfill $\Box$ 
\end{exa}

Absorbing sets are a related combinatorial structure that characterize iterative decoder failure of classical LDPC codes in many settings \cite{dolecek07}. In examining syndrome-based iterative decoding, we found that many failure inducing sets corresponded to absorbing sets, motivating the investigation in this paper.

\begin{defi}
An \textit{$(a,b)$-absorbing set} $\mathcal{A}$ in a Tanner graph $\cG$ is a subset of variable nodes such that $|\mathcal{A}| = a$, there are $b$ odd degree vertices in $W_{\mathcal{A}}$, and every variable node  $v\in \mathcal{A}$ has more even degree than odd degree neighbors in $\cG_{\mathcal{A}}$. We refer to $\cG_{\mathcal{A}}$ as the {\em absorbing set graph}.
\end{defi} 

In Figure 1, the set $\mathcal{A} = \{v_1,v_2,v_3,v_4\}$ is a $(4,2)$-absorbing set since there are four variable nodes in the set,  two odd degree check nodes in the graph $\mathcal{G}_{\mathcal{A}}$, and each variable node has more even degree check neighbors than odd degree. Similarly, $\{v_1,v_2,v_3\}$ and $\{v_1,v_3,v_4\}$ are $(3,3)$-absorbing sets. However,  the remaining two failure inducing sets are  not absorbing sets since each has  variable nodes with more odd degree than even degree check neighbors in its corresponding induced subgraph.\\ 
\indent We conclude this section with another example   to illustrate the extent to which absorbing sets correspond to failure inducing sets of a trapping set.
The purpose of this analysis is to better understand the topology of failure inducing sets and see how they relate with absorbing sets. Similar to Example 1, some of the failure inducing sets are absorbing sets whereas some are not.
For all analyses we have used the syndrome-based Gallager-B iterative decoder. 
\begin{exa}\label{exa-nonabs}
Consider Figure \ref{fig:Fig4NV21} from ~\cite{raveendran2021trapping}. Figure \ref{fig:Fig4NV21}(a) shows a subgraph induced by a $(5,3)$-trapping set $\mathcal{T}$. 
First, the only failure inducing set with three or fewer variable nodes is $\{v_2,v_4,v_5\}$. 
\begin{figure}
\centering
\vspace{-.1 in}
\resizebox{0.45\textwidth}{!}{
    \subfigure[Trapping set from~\cite{raveendran2021trapping}.]{
         \begin{tikzpicture}[thick,scale=1]
\node[circle,draw=black,label={above:\small{$v_1$}}] (v1) at (-2,2){};
\node[circle,draw=black,label={above:\small{$v_2$}}] (v2) at (2,2){};
\node[circle,draw=black,label={below:\small{$v_3$}}] (v3) at (2,-2){};
\node[circle,draw=black,label={below:\small{$v_4$}}] (v4) at (-2,-2){};
\node[circle,draw=black,label={above:\small{$v_5$}}] (v5) at (0,0){};

\node[shape=rectangle,draw=black,label={above:\small{$c_1$}}] (c1) at (0,2){};
\node[shape=rectangle,draw=black,label={right:\small{$c_2$}}] (c2) at (2,0){};
\node[shape=rectangle,draw=black,label={below:\small{$c_3$}}] (c3) at (0,-2){};
\node[shape=rectangle,draw=black,label={left:\small{$c_4$}}] (c4) at (-2,0){};
\node[shape=rectangle,draw=black,label={left:\small{$c_5$}}] (c5) at (-1,1){};
\node[shape=rectangle,draw=black,label={left:\small{$c_6$}}] (c6) at (1,-1){};
\node[shape=rectangle,draw=black,label={left:\small{$c_7$}}] (c7) at (-.5,-.5){};
\node[shape=rectangle,draw=black,label={below:\small{$c_8$}}] (c8) at (1.5,1.5){};
\node[shape=rectangle,draw=black,label={right:\small{$c_9$}}] (c9) at (-1.5,-1.5){};

\path [-,thick] (v1) edge node[left] {} (c1); \path [-,thick] (c1) edge node[left] {} (v2);
\path [-,thick] (v1) edge node[left] {} (c4); \path [-,thick] (c4) edge node[left] {} (v4);
\path [-,thick] (v4) edge node[left] {} (c3); \path [-,thick] (c3) edge node[left] {} (v3);
\path [-,thick] (v2) edge node[left] {} (c2); \path [-,thick] (c2) edge node[left] {} (v3);
\path [-,thick] (v1) edge node[left] {} (c5); \path [-,thick] (c5) edge node[left] {} (v5);
\path [-,thick] (v5) edge node[left] {} (c6); \path [-,thick] (c6) edge node[left] {} (v3);
\path [-,thick] (v5) edge node[left] {} (c7);
\path [-,thick] (v2) edge node[left] {} (c8);
\path [-,thick] (v4) edge node[left] {} (c9);
\end{tikzpicture}
         }
         \hfill
         \subfigure[A (4,4)-absorbing sets]{\raisebox{4mm}{
         \begin{tikzpicture}[thick,scale=1]
\node[circle,draw=black,label={}] (v1) at (-2,2){};
\node[circle,draw=black,label={}] (v2) at (2,2){};
\node[circle,draw=black,label={}] (v3) at (2,-2){};
\node[circle,draw=black,label={}] (v4) at (-2,-2){};

\node[shape=rectangle,draw=black,label={}] (c1) at (0,2){};
\node[shape=rectangle,draw=black,label={}] (c2) at (2,0){};
\node[shape=rectangle,draw=black,label={}] (c3) at (0,-2){};
\node[shape=rectangle,draw=black,label={}] (c4) at (-2,0){};
\node[shape=rectangle,draw=black,label={}] (c5) at (-1,1){};
\node[shape=rectangle,draw=black,label={}] (c6) at (1,-1){};
\node[shape=rectangle,draw=black,label={}] (c8) at (1.5,1.5){};
\node[shape=rectangle,draw=black,label={}] (c9) at (-1.5,-1.5){};

\path [-,thick] (v1) edge node[left] {} (c1); \path [-,thick] (c1) edge node[left] {} (v2);
\path [-,thick] (v1) edge node[left] {} (c4); \path [-,thick] (c4) edge node[left] {} (v4);
\path [-,thick] (v4) edge node[left] {} (c3); \path [-,thick] (c3) edge node[left] {} (v3);
\path [-,thick] (v2) edge node[left] {} (c2); \path [-,thick] (c2) edge node[left] {} (v3);
\path [-,thick] (v1) edge node[left] {} (c5); 
\path [-,thick] (c6) edge node[left] {} (v3);
\path [-,thick] (v2) edge node[left] {} (c8);
\path [-,thick] (v4) edge node[left] {} (c9);
\end{tikzpicture}
         }}
        \hfill
         \subfigure[A non-absorbing set subgraph]
         {\raisebox{0mm}{
         \begin{tikzpicture}[thick,scale=1]
\node[circle,draw=black,label={above:\small{$v_1$}}] (v1) at (-2,2){};
\node[circle,draw=black,label={above:\small{$v_2$}}] (v2) at (2,2){};
\node[circle,draw=black,label={below:\small{$v_4$}}] (v4) at (-2,-2){};
\node[circle,draw=black,label={above:\small{$v_5$}}] (v5) at (0,0){};

\node[shape=rectangle,draw=black,label={above:\small{$c_1$}}] (c1) at (0,2){};
\node[shape=rectangle,draw=black,label={right:\small{$c_2$}}] (c2) at (2,0){};
\node[shape=rectangle,draw=black,label={below:\small{$c_3$}}] (c3) at (0,-2){};
\node[shape=rectangle,draw=black,label={left:\small{$c_4$}}] (c4) at (-2,0){};
\node[shape=rectangle,draw=black,label={left:\small{$c_5$}}] (c5) at (-1,1){};
\node[shape=rectangle,draw=black,label={left:\small{$c_6$}}] (c6) at (1,-1){};
\node[shape=rectangle,draw=black,label={left:\small{$c_7$}}] (c7) at (-.5,-.5){};
\node[shape=rectangle,draw=black,label={below:\small{$c_8$}}] (c8) at (1.5,1.5){};
\node[shape=rectangle,draw=black,label={right:\small{$c_9$}}] (c9) at (-1.5,-1.5){};

\path [-,thick] (v1) edge node[left] {} (c1); \path [-,thick] (c1) edge node[left] {} (v2);
\path [-,thick] (v1) edge node[left] {} (c4); \path [-,thick] (c4) edge node[left] {} (v4);
\path [-,thick] (v4) edge node[left] {} (c3); 
\path [-,thick] (v2) edge node[left] {} (c2); 
\path [-,thick] (v1) edge node[left] {} (c5); \path [-,thick] (c5) edge node[left] {} (v5);
\path [-,thick] (v5) edge node[left] {} (c6); 
\path [-,thick] (v5) edge node[left] {} (c7);
\path [-,thick] (v2) edge node[left] {} (c8);
\path [-,thick] (v4) edge node[left] {} (c9);
\end{tikzpicture}
         }
         }
         }
\vspace{-.05 in}\caption{Failure inducing sets of a $(5,3)$-trapping set.\vspace{-.05 in}}
\label{fig:Fig4NV21}
\end{figure}
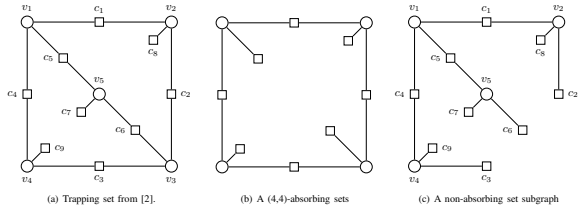
The corresponding input syndrome is $(1,1,1,1,1,1,1,1,1)$, but the decoder oscillates between a syndrome estimate of $(0,0,0,0,0,0,1,1,1)$ and $(0,0,0,0,0,0,0,0,0)$, corresponding to error estimates of all variables in error and none in error, respectively.
The decoder is successful for any other configuration of three or fewer variable nodes in error. \\
\indent All errors of weight four or five result in decoding failure. When four variable nodes are in error there are two types of behaviors shown in Table ~\ref{tab:TSexample2Table}. The first three error patterns correspond to $(4,4)$-absorbing sets whose induced graphs are all isomorphic to Figure~\ref{fig:Fig4NV21}(b). In these cases the decoder converges to a mismatched syndrome. The last two error patterns are not absorbing sets and have induced subgraphs corresponding to Figure~\ref{fig:Fig4NV21}(c). In these cases the decoder oscillates among syndromes, none of which match the input syndrome.\\
\indent Finally, the set of all variable nodes $\{v_1, v_2, v_3, v_4, v_5\}$ forms a $(5,3)$-absorbing set that, when in error, results in decoder failure due to mismatched syndrome. Thus, the entire set of variable nodes is failure inducing. \hfill $\Box$

\begin{table}
\centering
\resizebox{0.48\textwidth}{!}{
\begin{tabular}{c|c|c|c}
    \toprule
    \textbf{Nodes in Error} & \textbf{Input Syndrome} & \textbf{Estimated Syndrome} & \textbf{Estimated Error} \\
    \midrule
    $\{v_1, v_2, v_3, v_4\}$ & $(0,0,0,0,1,1,0,1,1)$ & $(0,0,0,0,0,0,1,0,0)$ & $\{v_5\}$\\ 
    \hline
    $\{v_1, v_2, v_3, v_5\}$ & $(0,0,1,1,0,0,1,1,0)$ & $(0,0,0,0,0,0,0,0,1)$ & $\{v_4\}$\\
    \hline
    $\{v_1,v_3, v_4, v_5\}$ & $(1,1,0,0,0,0,1,0,1)$ & $(0,0,0,0,0,0,0,1,0)$ & $\{v_2\}$\\ 
    \hline
    $\{v_1, v_2, v_4, v_5\}$ & $(0,1,1,0,0,1,1,1,1)$ & $\begin{array}{c}(0,0,0,0,0,0,0,0,0)\\(0,0,0,0,0,0,0,0,0)\\ (1,1,1,1,1,1,0,0,0)\\(1,1,1,1,1,1,0,0,0)\end{array}$ & $\begin{array}{c}\{v_2, v_3, v_4, v_5\}\\ \{ \}\\ \{v_2, v_3, v_4, v_5\}\\ \{v_1, v_3\}\end{array}$\\
    \hline
    $\{v_2, v_3, v_4, v_5\}$ & $(1,0,0,1,1,0,1,1,1)$ & $\begin{array}{c}(1,1,1,1,1,1,0,0,0)\\(1,0,0,1,1,0,1,1,1)\\(0,0,0,0,0,0,0,0,0)\\(0,0,0,0,0,0,0,0,0)\\\end{array}$ & $\begin{array}{c}\{v_1, v_3\}\\  \{v_1,v_2, v_4, v_5\}\\ \{ \}\\ \{v_1, v_2,v_4,v_5\}\end{array}$\\
    \bottomrule
  \end{tabular}
}
\caption{Behavior of weight four error patterns. The first three error patterns correspond to isomorphic (4,4)-absorbing sets in Figure 2(b). The last two error patterns correspond to isomorphic  failure inducing sets in Figure 2(c).}
\vspace{-.1 in}
\label{tab:TSexample2Table}
\end{table}
\end{exa}

\begin{exa}
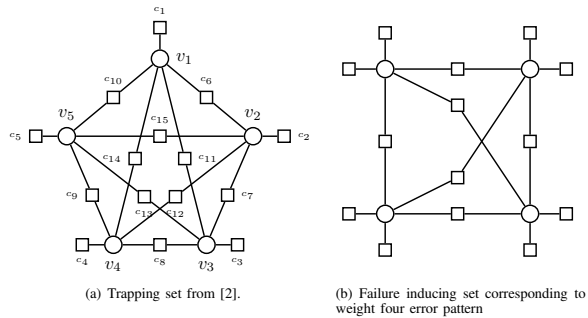
\begin{figure}[!h]
\centering
\resizebox{0.45\textwidth}{!}{
\centering
\subfigure[Trapping set from~\cite{raveendran2021trapping}.]{
         \begin{tikzpicture}[thick,scale=.6]

\node[circle,draw=black,label={right:$v_1$}] (v1) at (0,3){};
\node[circle,draw=black,label={above:$v_2$}] (v2) at (3,.5) {};
\node[circle,draw=black,label={below:$v_3$}] (v3) at (1.5,-3) {};
\node[circle,draw=black,label={below:$v_4$}] (v4) at (-1.5,-3) {};
\node[circle,draw=black,label={above:$v_5$}] (v5) at (-3,.5) {};

\node[shape=rectangle,draw=black,label={above:\tiny$c_1$}] (c1) at (0,4){};
\node[shape=rectangle,draw=black,label={right:\tiny$c_2$}] (c2) at (4,.5){};
\node[shape=rectangle,draw=black,label={below:\tiny$c_3$}] (c3) at (2.5,-3){};
\node[shape=rectangle,draw=black,label={below:\tiny$c_4$}] (c4) at (-2.5,-3){};
\node[shape=rectangle,draw=black,label={left:\tiny$c_5$}] (c5) at (-4,.5){};

\path [-,thick] (v1) edge node[left] {} (c1);
\path [-,thick] (v2) edge node[left] {} (c2);
\path [-,thick] (v3) edge node[left] {} (c3);
\path [-,thick] (v4) edge node[left] {} (c4);
\path [-,thick] (v5) edge node[left] {} (c5);

\node[shape=rectangle,draw=black,label={above:\tiny{$c_6$}}] (c6) at (1.5,1.75){};
\node[shape=rectangle,draw=black,label={above:\tiny$c_{10}$}] (c10) at (-1.5,1.75){};
\node[shape=rectangle,draw=black,label={left:\tiny$c_{9}$}] (c9) at (-2.18,-1.4){};
\node[shape=rectangle,draw=black,label={right:\tiny$c_{7}$}] (c7) at (2.18,-1.4){};
\node[shape=rectangle,draw=black,label={below:\tiny$c_{8}$}] (c8) at (0,-3){};

\node[shape=rectangle,draw=black,label={above:\tiny$c_{15}$}] (c15) at (0,0.5){};
\node[shape=rectangle,draw=black,label={right:\tiny$c_{11}$}] (c11) at (0.7985,-.22){};
\node[shape=rectangle,draw=black,label={left:\tiny$c_{14}$}] (c14) at (-0.7985,-.22){};
\node[shape=rectangle,draw=black,label={below:\tiny$c_{12}$}] (c12) at (0.5,-1.45){};
\node[shape=rectangle,draw=black,label={below:\tiny$c_{13}$}] (c13) at (-0.5,-1.45){};

\path [-,thick] (v1) edge node[left] {} (c6); \path [-,thick] (c6) edge node[left] {} (v2);
\path [-,thick] (v2) edge node[left] {} (c7); \path [-,thick] (v3) edge node[left] {} (c7);
\path [-,thick] (v3) edge node[left] {} (c8); \path [-,thick] (v4) edge node[left] {} (c8);
\path [-,thick] (v4) edge node[left] {} (c9); \path [-,thick] (v5) edge node[left] {} (c9);
\path [-,thick] (v5) edge node[left] {} (c10); \path [-,thick] (v1) edge node[left] {} (c10);

\path [-,thick] (v1) edge node[left] {} (c11); \path [-,thick] (v3) edge node[left] {} (c11);
\path [-,thick] (v1) edge node[left] {} (c14); \path [-,thick] (v4) edge node[left] {} (c14);
\path [-,thick] (v2) edge node[left] {} (c12); \path [-,thick] (v4) edge node[left] {} (c12);
\path [-,thick] (v2) edge node[left] {} (c15); \path [-,thick] (v5) edge node[left] {} (c15);
\path [-,thick] (v5) edge node[left] {} (c13); \path [-,thick] (v3) edge node[left] {} (c13);
\end{tikzpicture}
         }
         \hfill
         \subfigure[Failure inducing set corresponding to weight four error pattern]{\raisebox{4mm}{
         \begin{tikzpicture}[thick,scale=0.7]
\node[circle,draw=black,label={}] (v1) at (-2,2){};
\node[circle,draw=black,label={}] (v2) at (2,2){};
\node[circle,draw=black,label={}] (v3) at (2,-2){};
\node[circle,draw=black,label={}] (v4) at (-2,-2){};

\node[shape=rectangle,draw=black,label={}] (c1) at (0,2){};
\node[shape=rectangle,draw=black,label={}] (c2) at (2,0){};
\node[shape=rectangle,draw=black,label={}] (c3) at (0,-2){};
\node[shape=rectangle,draw=black,label={}] (c4) at (-2,0){};
\node[shape=rectangle,draw=black,label={}] (c5) at (0,1){};
\node[shape=rectangle,draw=black,label={}] (c6) at (0,-1){};
\node[shape=rectangle,draw=black,label={}] (c8) at (3,2){};
\node[shape=rectangle,draw=black,label={}] (c81) at (2,3){}; \path [-,thick] (v2) edge node[left] {} (c81);
\node[shape=rectangle,draw=black,label={}] (c9) at (-3,-2){};
\node[shape=rectangle,draw=black,label={}] (c91) at (-2,-3){};
\path [-,thick] (v4) edge node[left] {} (c91);

\node[shape=rectangle,draw=black,label={}] (c11) at (-2,3){}; \path [-,thick] (v1) edge node[left] {} (c11);
\node[shape=rectangle,draw=black,label={}] (c12) at (-3,2){}; \path [-,thick] (v1) edge node[left] {} (c12);

\node[shape=rectangle,draw=black,label={}] (c31) at (2,-3){}; \path [-,thick] (v3) edge node[left] {} (c31);
\node[shape=rectangle,draw=black,label={}] (c32) at (3,-2){}; \path [-,thick] (v3) edge node[left] {} (c32);

\path [-,thick] (v1) edge node[left] {} (c1); \path [-,thick] (c1) edge node[left] {} (v2);
\path [-,thick] (v1) edge node[left] {} (c4); \path [-,thick] (c4) edge node[left] {} (v4);
\path [-,thick] (v4) edge node[left] {} (c3); \path [-,thick] (c3) edge node[left] {} (v3);
\path [-,thick] (v2) edge node[left] {} (c2); \path [-,thick] (c2) edge node[left] {} (v3);
\path [-,thick] (v1) edge node[left] {} (c5); \path [-,thick] (c5) edge node[left] {} (v3);
\path [-,thick] (c6) edge node[left] {} (v4);
\path [-,thick] (v2) edge node[left] {} (c8);
\path [-,thick] (v4) edge node[left] {} (c9);
\path [-,thick] (c6) edge node[left] {} (v2);
\end{tikzpicture}
         }}
}
\caption{$(5,5)$-trapping set}
\label{fig:Fig7NV21}
\end{figure}

Consider Figure~\ref{fig:Fig7NV21}(a) from ~\cite[Figure~7]{raveendran2021trapping}. First, we found that any error pattern of weight three or fewer is decoded correctly\footnote{In~\cite{raveendran2021trapping}, the authors write that ``for simple binary decoders like syndrome based Gallager-B, any weight three or more error patterns will result in a failure inducing set,'' concluding that the critical number under the Gallager-B decoding is $3$. However, we observed that all weight three errors successfully decoded on the $7^{th}$ iteration.}.

On the other hand, any error pattern of weight four yields an induced subgraph isomorphic to Figure~\ref{fig:Fig7NV21}(b), and is failure inducing.
This configuration is not an absorbing set in that there are variables nodes with an equal number of odd degree and even degree check nodes. Finally, the unique weight five error pattern corresponds to a $(5,5)$-absorbing set and is also a failure inducing set.
\hfill $\Box$
\end{exa}

\section{Absorbing Sets on Their Own}

In this section we aim to understand when absorbing sets are  failure inducing sets with respect to the syndrome-based decoder. We consider absorbing set graphs and analyze which subsets of variable nodes are failure inducing with respect to that graph. We first show that all absorbing sets whose graphs have odd degree check nodes are failure inducing sets with respect to its graph, and therefore are  trapping sets.

\begin{theo}\label{T-oddcheck} 
Let $\cA$ be an $(a,b)$-absorbing set with $b\geq1$.
Then $\cA$ is a failure inducing set. In particular, the decoding syndrome will always be $\vec{0}$ and thus the syndrome value at the odd degree check nodes will never match the input syndrome, thus resulting in a decoding failure.
\end{theo}

\begin{proof}
Let $\cA$ be an absorbing set with at least one odd degree check node. Suppose all variable nodes are in error. Then the input syndrome $\sigma = (\sigma_1,\ldots, \sigma_k)$ has $\sigma_i = 1$ if check node $i$ has odd degree, and $\sigma_i = 0$ otherwise.

The decoder first assumes an all zero error pattern, corresponding to syndrome $\vec{0}$. 
For the next step of the decoding, all even degree check nodes send $0$ and the odd check nodes send $1$. 
Since $\cA$ is an absorbing set, each variable node has strictly more even degree than odd degree check nodes. Thus the error pattern is again $\vec{0}$. When sending information to the check nodes, the extrinsic check nodes are at most evenly tied between odd degree and even degree check nodes. 
If there is a tie, in all cases the variable nodes will send $0$ to every check node. The syndrome is again $\vec{0}$, mismatching the input syndrome at the odd degree check nodes. Additionally, the algorithm is back to the beginning scenario (sending $0$'s to every check node). Therefore the algorithm never terminates, as it decodes to an all zero error pattern and corresponding all zero syndrome at every step.
\end{proof}

We now consider the case when an absorbing set has only even degree check nodes in its graph. While we are primarily concerned with trapping sets in this section, we note a case in which certain error patterns are always failure inducing and hence always form trapping sets. We expand on this idea in Section 5.

For an absorbing set $\cA$, let $H_{\cA}$ be the parity check matrix corresponding to $\cG_{\cA}$.

\begin{theo}\label{all_even_checks} Consider an absorbing set $\cA$ where every check node is of even degree. That is, $\cA$ is an $(a,0)$-absorbing set. For any subset $\cB \subseteq \cA$ whose induced subgraph $\cG_{\cB}$ has no odd degree check nodes, $\cB$ is failure inducing if and only if its corresponding indicator vector is not in the rowspace of $H_{\cA}$. 
\end{theo}

\begin{proof}
Assume all of the variable nodes in $\cB$ are in error and let $\vec{v}$ denote the indicator vector corresponding to $\cB$. Since $\cG_{\cB}$ has no odd degree check nodes, the corresponding input syndrome is $\sigma=\vec{0}$. The decoder assumes an initial error pattern of $\vec{0}$. Then the corresponding estimated syndrome is $\vec{0}$. Since this syndrome matches the input syndrome, the decoding halts and estimates an error pattern $\hat{e}=\vec{0}$. If $\vec{v}$ is a stabilizer, then $e + \hat{e}=\vec{v} + \vec{0}=\vec{v}$ is a stabilizer and the decoder returned a degenerate error. Otherwise there is a logical error and we have decoding failure.
\end{proof}

\begin{cor}\cite[Theorem 2]{10273573}\label{all_even_checks_corollary} Consider an absorbing set $\cA$ where every check node is of even degree. That is, $\cA$ is an $(a,0)$-absorbing set. If $\vec{1}$ is in the rowspace of $H_{\mathcal{A}}$, then $\mathcal{A}$ is not a failure inducing set on $\cG_{\mathcal{A}}$. That is, the decoder will return a degenerate error when all variable nodes are in error. If $\vec{1}$ is not in the rowspace of $H$, then there is a logical error when decoding and $\cA$ is a failure inducing set for $\cG_{\cA}$. 
\end{cor}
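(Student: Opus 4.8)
The plan is to obtain this corollary directly as the special case $\cB = \cA$ of Theorem~\ref{all_even_checks}, so the bulk of the work is simply checking that the theorem's hypotheses apply and then unpacking the degenerate-versus-logical error dichotomy already present in its proof. First I would observe that, since $\cA$ is an $(a,0)$-absorbing set, every check node of $\cG_{\cA}$ has even degree; in particular the induced subgraph $\cG_{\cB}$ for the choice $\cB = \cA$ has no odd degree check nodes, which is exactly the hypothesis required to invoke Theorem~\ref{all_even_checks}. The indicator vector of $\cB = \cA$ over the $a$ variable nodes of $\cA$ is $\vec{1}$, so the theorem yields that $\cA$ is failure inducing on $\cG_{\cA}$ if and only if $\vec{1}$ is \emph{not} in the rowspace of $H_{\cA}$. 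Taking the contrapositive gives the first assertion: if $\vec{1}$ lies in the rowspace of $H_{\cA}$, then $\cA$ is not failure inducing.

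Next I would carry over the error-type interpretation from the proof of Theorem~\ref{all_even_checks}. With all variable nodes of $\cA$ in error, the input syndrome is $\vec{0}$ (all checks have even degree), the decoder halts immediately at the estimate $\hat{e} = \vec{0}$, and hence $e + \hat{e} = \vec{1}$. By definition $\cA$ decodes to a degenerate error exactly when $e + \hat{e} = \vec{1}$ is a stabilizer, i.e.\ when $\vec{1}$ is in the rowspace of $H_{\cA}$; in that case decoding is counted as successful and $\cA$ is not failure inducing. Conversely, when $\vec{1}$ is not in the rowspace, $e + \hat{e} = \vec{1}$ is not a stabilizer, so the decoder commits a logical error and $\cA$ is failure inducing. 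This reproduces both clauses of the corollary.

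The one point that needs care is purely notational: the two rowspace conditions in the statement should refer to the same matrix, namely the local parity check matrix $H_{\cA}$ of $\cG_{\cA}$, rather than to the global $H$ (the ``$H$'' in the second clause is a typo for $H_{\cA}$). This matters because the decoder in question runs on the absorbing set graph $\cG_{\cA}$ in isolation, so the relevant notion of ``stabilizer'' is membership in the rowspace of $H_{\cA}$, exactly as used in the proof of Theorem~\ref{all_even_checks}. Since there is no genuine mathematical obstacle---the corollary is the $\cB = \cA$ instance of an already-proved theorem---the main task is to state this reduction cleanly and to confirm that the degenerate/logical dichotomy transfers verbatim, which it does.
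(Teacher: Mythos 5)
Your proposal is correct and follows essentially the same route as the paper: the paper's own proof of the corollary is just the argument of Theorem~\ref{all_even_checks} rerun with $\vec{v}=\vec{1}$, which is precisely the specialization $\cB=\cA$ you invoke. Your observation that the ``$H$'' in the second clause should read $H_{\cA}$ is also consistent with how the paper actually uses the statement.
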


\begin{proof}
Assuming all of the variable nodes in $\cA$ are in error, the corresponding input syndrome is $\sigma=\vec{0}$. The decoder assumes an initial error pattern of $\vec{0}$. Then the corresponding estimated syndrome is $\vec{0}$. Since this syndrome matches the input syndrome, the decoding halts and estimates an error pattern $\hat{e}=\vec{0}$. If $\vec{1}$ is a stabilizer, then $e + \hat{e}=\vec{1}$ is a stabilizer and the decoder returned a degenerate error. Otherwise there is a logical error.
\end{proof}

For a special case of Theorem \ref{all_even_checks} we consider a family of graphs known as theta graphs \cite{kelley2008ldpc}.

\begin{exa}\label{theta_graph_example} An $(a,b,c)$-\textit{theta graph}, denoted $T(a,b,c)$, is a graph consisting of two vertices $u$ and $w$, each of degree three, that are connected to each other via three disjoint paths $A$, $B$, $C$ of (edge) lengths $a \geq 1$, $b \geq 1$, and $c \geq 1$, respectively. In Figure \ref{thetagraphs} we see two examples of Tanner graphs forming theta graphs. However, notice that Figure \ref{fig:theta1} forms a $(7,0)$-absorbing set, while Figure \ref{fig:theta2} does not form an absorbing set since all variable node neighbors of $u$ and $w$ have the same number of even and odd degree check neighbors.
When all variable nodes of Figure \ref{fig:theta1} are in error, then Corollary~\ref{all_even_checks_corollary} applies. 
However, we also note that the variable nodes in Figure \ref{fig:theta1} can be partitioned into the union of two smaller $(5,2)$-absorbing sets, namely, the variable nodes in the top and bottom cycles. Indeed, if one of these smaller $(5,2)$-absorbing sets are in error, the decoder fails. 
Similarly, in Figure \ref{fig:theta2}, if either one of the two sets of variable nodes in the bottom and top cycles are in error, there is a decoding failure. Hence these two theta graphs are trapping sets. \hfill $\Box$
\begin{figure}[!h]
\centering
         \resizebox{0.45\textwidth}{!}{
         \subfigure[Case One]{\raisebox{-1mm}{
         \begin{tikzpicture}[thick,scale=.6]
    \draw (0,0) circle (2);
    \draw (0,0) -- (-1,0);
    \draw (-2,0) -- (-1,0);
    \draw (0,0) -- (1,0);
    \draw (1,0) -- (2,0);
    \node[draw,circle,fill=white] at (0:2) {\footnotesize $w$};
    \node[draw,circle,fill=white] at (60:2) {};
    \node[draw,circle,fill=white] at (120:2) {};
    \node[draw,circle,fill=white] at (180:2) {\footnotesize $u$};
    \node[draw,circle,fill=white] at (240:2) {};
    \node[draw,circle,fill=white] at (300:2) {};

    \node[draw,rectangle,fill=white] at (30:2) {};
    \node[draw,rectangle,fill=white] at (90:2) {};
    \node[draw,rectangle,fill=white] at (150:2) {};
    \node[draw,rectangle,fill=white] at (210:2) {};
    \node[draw,rectangle,fill=white] at (270:2) {};
    \node[draw,rectangle,fill=white] at (330:2) {};

    \node[draw,circle,fill=white] at (0:0) {};
    \node[draw,rectangle,fill=white] at (0:-1) {};
    \node[draw,rectangle,fill=white] at (0:1) {};
\end{tikzpicture}
             }\label{fig:theta1}}\quad\quad 
         \hfill
         \subfigure[Case Two]{\raisebox{-1mm}{
         \begin{tikzpicture}[thick,scale=.6]
    \draw (0,0) circle (2);
    \draw (0,0) -- (-1,0);
    \draw (-2,0) -- (-1,0);
    \draw (0,0) -- (1,0);
    \draw (1,0) -- (2,0);
    
    \node[draw,rectangle,fill=white] at (0:2) {\scriptsize $w$};
    \node[draw,rectangle,fill=white] at (60:2) {};
    \node[draw,rectangle,fill=white] at (120:2) {};
    \node[draw,rectangle,fill=white] at (180:2) {\scriptsize $u$};
    \node[draw,rectangle,fill=white] at (240:2) {};
    \node[draw,rectangle,fill=white] at (300:2) {};

    \node[draw,circle,fill=white] at (30:2) {};
    \node[draw,circle,fill=white] at (90:2) {};
    \node[draw,circle,fill=white] at (150:2) {};
    \node[draw,circle,fill=white] at (210:2) {};
    \node[draw,circle,fill=white] at (270:2) {};
    \node[draw,circle,fill=white] at (330:2) {};

    \node[draw,rectangle,fill=white] at (0:0) {};
    \node[draw,circle,fill=white] at (0:-1) {};
    \node[draw,circle,fill=white] at (0:1) {};
\end{tikzpicture}
         }\label{fig:theta2}}
         }
\caption{Non isomorphic $T(6,6,4)$ theta graphs}
\label{thetagraphs}
\end{figure}
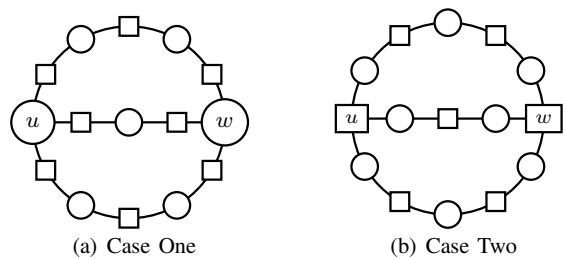
\end{exa}
Although absorbing sets whose graphs have only even degree check nodes are not always failure inducing, we show in the next series of results that such absorbing sets may still be trapping sets due to the presence of proper failure inducing sets.

\begin{lem}
Let $\cA$ be an $(a,0)$-absorbing set such that $\cG_\cA$ is isomorphic to the path $P_{2a-1}$ equal to $v_1c_1v_2c_2\dots c_{a-1}v_a$.
 
Then, $\{v_1\}$ is a failure inducing set. Indeed, on the $\ell^{th}$ iteration the syndrome 
is $\hat{\sigma}^{\ell} = (0,1,1\ldots,1,0,\ldots,0)$ with $\ell -1$ ones, thus it never matches the input syndrome $(1,0,\dots, 0)$. Since the graph is symmetric, a similar statement holds true for $\{v_a\}$.
\end{lem}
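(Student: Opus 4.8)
The plan is to trace the Gallager-B message-passing dynamics directly, exploiting the extreme simplicity of the path $P_{2a-1}$. First I would record the input syndrome: with only $v_1$ in error, the single check whose parity is flipped is $c_1$, since $c_1 = \{v_1,v_2\}$ is the only check adjacent to $v_1$ and $v_2$ is not in error; thus $\sigma = (1,0,\dots,0)$. The structural observations that drive everything are that (i) every check node $c_i$ has degree two, joining $v_i$ and $v_{i+1}$, so its outgoing message on one edge is simply the XOR of the incoming message on the other edge with $\sigma_i$; (ii) every interior variable node $v_i$ with $2\le i\le a-1$ also has degree two and therefore merely relays, its outgoing message on one edge equalling the incoming message on the other; and (iii) the endpoints $v_1,v_a$ have degree one and hence no extrinsic information, so by the tie-breaking low-weight rule each sends the value $0$ into its unique incident check at every iteration.

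Next I would establish, by induction on $\ell$, that a single error wave of value $1$ is injected at $c_1$ (because $\sigma_1=1$ is XORed with the constant $0$ arriving from $v_1$) and travels rightward one check per iteration, while all left-travelling messages remain $0$ (they trace back to the constant $0$ emitted by $v_a$). Concretely, letting $R_i^\ell$ denote the message $c_i$ sends toward $v_{i+1}$ and $L_i^\ell$ the message it sends toward $v_i$, the recursions from (i)--(iii) give $R_1^\ell=1$ for all $\ell\ge 1$, $R_i^\ell=R_{i-1}^{\ell-1}$ for $i\ge 2$, and $L_i^\ell=0$ for all $i\ge 2$; hence $R_i^\ell=1$ exactly when $\ell\ge i$. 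Computing the estimated syndrome $\hat\sigma_i^\ell$ at each check from its two incoming variable messages then yields $\hat\sigma_1^\ell=0$ for every $\ell$ (its two incoming messages are the $0$ from $v_1$ and a left-travelling $0$) and $\hat\sigma_i^\ell=R_{i-1}^\ell$ for $i\ge 2$. This is precisely the claimed pattern $\hat\sigma^\ell=(0,1,\dots,1,0,\dots,0)$ with a contiguous block of ones that grows by one position each iteration, the exact count depending only on the iteration-indexing convention.

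Finally I would conclude: the first coordinate of the estimated syndrome is $0$ at every iteration while the input demands $\sigma_1=1$, so $c_1$ is never matched and the decoder never terminates successfully; thus $\{v_1\}$ is failure inducing. The statement for $\{v_a\}$ follows because reversing the path is a graph automorphism of $P_{2a-1}$ carrying $v_1$ to $v_a$, so the identical argument applies verbatim. The main obstacle is conceptual rather than computational, namely the correct treatment of the degree-one endpoints: the point is that $v_1$, having no extrinsic neighbor, can never communicate its own error to $c_1$ and keeps transmitting $0$, so from $c_1$'s local viewpoint its unsatisfied parity is never explained even though the injected $1$ harmlessly propagates away down the chain. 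Making this boundary behavior precise, and then pinning down the one-step-per-iteration propagation by a clean induction, is where the care is needed; the interior relaying is routine once (i)--(iii) are in place.
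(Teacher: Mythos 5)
Your proof is correct and follows essentially the same route as the paper's: both trace the Gallager-B messages along the path, observe that the degree-one endpoints $v_1,v_a$ always emit $0$, and show that the single $1$ injected at $c_1$ propagates rightward one check per iteration so that $\hat{\sigma}_1$ remains $0$ forever and the syndrome at $c_1$ is never matched. Your explicit $R_i^\ell/L_i^\ell$ recursion merely formalizes the induction the paper dispatches with ``continuing in such a fashion,'' and the off-by-one in the count of ones that you flag as an indexing convention is already present between the paper's own statement and its proof.
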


\begin{proof}
Set $\ell =0$. The decoder first assumes an all-zero error pattern. 
Therefore 
$\hat{\sigma}^0 = \Vec{0}.$

In the next step of decoding, the check node $c_1$ sends $0 + 1 \equiv 1 \pmod 2$ to $v_1$ and $v_2$. Every other check node sends $0$ to their neighboring variable nodes.

Next, $v_1$ and $v_a$ send $0$ to $c_1$ and $c_{a-1}$, respectively, since they are the endpoints of the path and have no extrinsic check nodes. The variable node $v_2$ sends $0$ to $c_1$ and $1$ to $c_2$. Every other variable node sends $0$ to their neighboring check nodes. Therefore, 
$\hat{\sigma}^{1} = (0,1,0,\dots, 0).$

For $\ell=2$, as in every step, $v_1$ and $v_a$ send 0 to $c_1$ and $c_{a-1}$. But, in this iteration, $v_2$ sends 0 to $c_1$ and 1 to $c_2$, resulting on the syndrome $\hat{\sigma}^2 = (0,1,1,0,\ldots, 0)$.
Continuing in such a fashion we obtain $\hat{\sigma}^{\ell}$ as in the claim. Moreover for $\ell>a$, the syndrome never changes. 

\end{proof}

While it is necessary to identify at least one failure inducing set to characterize whether or not a subset of variable nodes forms a trapping set, we are interested in understanding failure inducing sets more broadly to gain a better understanding of overall decoder performance. Empirical results for paths indicate different phenomena occurring based on the parity of the number of variable nodes $a$. When $a$ is even,  $\vec{1}$ is in the rowspace of the parity check matrix for $\mathcal{G}_{\mathcal{A}}$, and when $a$ is odd, $\vec{1}$ is not in the rowspace of the parity check matrix. Because of this, we observe successful decoding of a large class of errors $e$ when $a$ is even, as the decoder returns $e+\vec{1}$ as the estimated error, successfully returning a degenerate error. However, in the odd case this results in a logical error.

\begin{theo}
    An acyclic $(a,0)$-absorbing set $\cA$ is a trapping set.
\end{theo}
\begin{proof}
Without loss of generality assume $\cG_{\cA}$ is connected (since a union of disconnected absorbing sets is absorbing), so $\cG_{\cA}$ is a tree. Thus, $\cG_\cA$ has at least two leaves, and by assumption, since $\cG_\cA$ has no degree one check nodes, the leaves are variable nodes. Let $v$ be a leaf and $c$ its adjacent check node.
We claim that $\{v\}$ is a failure inducing set. In particular, the input syndrome value at $c$ is 1, but the decoding process always estimates a syndrome value of 0.
This is due to the fact that all incoming messages to $c$ from neighboring variable nodes are 0, resulting in the estimated syndrome value at $c$ being 0, hence mismatched and resulting in decoding failure. The remainder of the proof shows why this is the case. 

First note that degree one variable nodes always send 0 to $c$ since they have no extrinsic variable nodes.
On the other hand, variable nodes of degree strictly higher than two will also send 0 due to majority voting. To see this, see Figure~\ref{acyclic1}.
Decoding starts with all variable nodes sending 0 to their neighbors.
Since the outgoing check node message is the XOR of all extrinsic check nodes and the input syndrome value, $c$ sends 1 to its neighbors and all other check nodes send 0.
Additionally, at variable nodes of degree higher than two (such as $v'$) there are more check nodes sending 0 than sending 1. So there is at most an equal number of extrinsic check nodes sending 0 and sending 1. In event of a tie, 0 is sent. Thus, the incoming messages in $c$ are always 0, and this causes a mismatch.
\begin{figure}[!h]
\centering
\resizebox{0.45\textwidth}{!}{
         \subfigure[\empty]{
         \begin{tikzpicture}[thick,scale=1]
\node[circle,draw=black,label={}] (v1) at (-4,0){};
\node[circle,draw=black,label={[xshift=-0.3cm, yshift=-0.1cm]\Large$v'$}] (v2) at (0,0){};
\node[circle,draw=black,label={}] (v3) at (4,0){};
\node[circle,draw=black,label={}] (v4) at (8,0){};
\node[circle,draw=black,label={}] (v5) at (-2,2){};
\node[circle,draw=black,label={\Large$v$}] (v6) at (2,2){};
\node[circle,draw=black,label={}] (v7) at (2,-2){};
\node[circle,draw=black,label={}] (v8) at (6,-2){};

\node[shape=rectangle,draw=black,label={}] (c1) at (-2,0){\color{red}{0}};
\node[shape=rectangle,draw=black,label={[xshift=-0.25cm, yshift=-0.05cm]\Large$c$}] (c2) at (2,0){\color{red}{1}};
\node[shape=rectangle,draw=black,label={}] (c3) at (6,0){\color{red}{0}};
\node[shape=rectangle,draw=black,label={}] (c4) at (0,2){\color{red}{0}};
\node[shape=rectangle,draw=black,label={}] (c5) at (4,-2){\color{red}{0}};

\path [-,thick] (v1) edge node[left] {} (c1); 
\path [-,thick] (v2) edge node[left] {} (c1);
\path [-,thick] (v2) edge node[left] {} (c2);
\path [-,thick] (v3) edge node[left] {} (c2);
\path [-,thick] (v4) edge node[left] {} (c3);
\path [-,thick] (v3) edge node[left] {} (c3);
\path [-,thick] (v5) edge node[left] {} (c4);
\path [-,thick] (v2) edge node[left] {} (c4);
\path [-,thick] (v6) edge node[left] {} (c2);
\path [-,thick] (v7) edge node[left] {} (c2);
\path [-,thick] (v7) edge node[left] {} (c5);
\path [-,thick] (v8) edge node[left] {} (c5);


\draw[->,blue,thick]  (-3.8,-0.2) to node[right]{\hspace{.1 in}0}(-3,-0.2);
\draw[->,blue,thick]  (-0.2,-0.2) to node[left] {\hspace{-.4 in}0} (-1,-0.2);
\draw[->,blue,thick]  (0.2,0.2) to node[above] {\rotatebox{90}{\hspace{.1 in}0}}(0.2,1);
\draw[->,blue,thick]  (-1.8,1.8) to node[right] {{\hspace{.1 in}0}}(-1,1.8);
\draw[->,blue,thick]  (0.2,-0.2) to node[right] {\hspace{.1 in}0}(1,-0.2);
\draw[->,blue,thick]  (2.2,1.8) to node[below] {\rotatebox{90}{\hspace{-.2 in}0}}(2.2,1);
\draw[->,blue,thick]  (2.2,-1.8) to node[above] {\rotatebox{90}{\hspace{.1 in}0}}(2.2,-1);
\draw[->,blue,thick]  (2.2,-1.8) to node[right] {{\hspace{.1 in}0}}(3,-1.8);
\draw[->,blue,thick]  (5.8,-1.8) to node[left] {{\hspace{-.4 in}0}}(5,-1.8);
\draw[->,blue,thick]  (3.8,-0.2) to node[left] {\hspace{-.4 in}0} (3,-0.2);
\draw[->,blue,thick]  (4.2,-0.2) to node[right] {\hspace{.1 in}0} (5,-0.2);
\draw[->,blue,thick]  (7.8,-0.2) to node[left] {\hspace{-.4 in}0} (7,-0.2);

\end{tikzpicture}
         }\quad\quad 
         \hfill
         \subfigure[\empty]{\raisebox{0mm}{
         \begin{tikzpicture}[thick,scale=1]
\node[circle,draw=black,label={}] (v1) at (-4,0){};
\node[circle,draw=black,label={[xshift=-0.3cm, yshift=-0.1cm]\Large$v'$}] (v2) at (0,0){};
\node[circle,draw=black,label={}] (v3) at (4,0){};
\node[circle,draw=black,label={}] (v4) at (8,0){};
\node[circle,draw=black,label={}] (v5) at (-2,2){};
\node[circle,draw=black,label={\Large$v$}] (v6) at (2,2){};
\node[circle,draw=black,label={}] (v7) at (2,-2){};
\node[circle,draw=black,label={}] (v8) at (6,-2){};

\node[shape=rectangle,draw=black,label={}] (c1) at (-2,0){\color{red}{0}};
\node[shape=rectangle,draw=black,label={[xshift=-0.25cm, yshift=-0.05cm]\Large$c$}] (c2) at (2,0){\color{red}{1}};
\node[shape=rectangle,draw=black,label={}] (c3) at (6,0){\color{red}{0}};
\node[shape=rectangle,draw=black,label={}] (c4) at (0,2){\color{red}{0}};
\node[shape=rectangle,draw=black,label={}] (c5) at (4,-2){\color{red}{0}};

\path [-,thick] (v1) edge node[left] {} (c1); 
\path [-,thick] (v2) edge node[left] {} (c1);
\path [-,thick] (v2) edge node[left] {} (c2);
\path [-,thick] (v3) edge node[left] {} (c2);
\path [-,thick] (v4) edge node[left] {} (c3);
\path [-,thick] (v3) edge node[left] {} (c3);
\path [-,thick] (v5) edge node[left] {} (c4);
\path [-,thick] (v2) edge node[left] {} (c4);
\path [-,thick] (v6) edge node[left] {} (c2);
\path [-,thick] (v7) edge node[left] {} (c2);
\path [-,thick] (v7) edge node[left] {} (c5);
\path [-,thick] (v8) edge node[left] {} (c5);


\draw[->,violet,thick]  (-2.3,-0.2) to node[left]{\hspace{-.35 in}0}(-3,-0.2);
\draw[->,violet,thick]  (-1.7,-0.2) to node[right] {\hspace{.08 in}0} (-1,-0.2);
\draw[->,violet,thick]  (0.2,1.65) to node[below] {\rotatebox{90}{\hspace{-.18 in}0}}(0.2,1.05);
\draw[->,violet,thick]  (-.3,1.8) to node[left] {{\hspace{-.4 in}0}}(-1,1.8);
\draw[->,violet,thick]  (1.7,-0.2) to node[left] {\hspace{-.4 in}1}(1,-0.2);
\draw[->,violet,thick]  (2.2,0.3) to node[above] {\rotatebox{90}{\hspace{.1 in}1}}(2.2,1);
\draw[->,violet,thick]  (2.2,-0.3) to node[below] {\rotatebox{90}{\hspace{-.2 in}1}}(2.2,-1);
\draw[->,violet,thick]  (3.7,-1.8) to node[left] {{\hspace{-.4 in}0}}(3,-1.8);
\draw[->,violet,thick]  (4.3,-1.8) to node[right] {{\hspace{.1 in}0}}(5,-1.8);
\draw[->,violet,thick]  (2.3,-0.2) to node[right] {\hspace{.1 in}1} (3,-0.2);
\draw[->,violet,thick]  (5.7,-0.2) to node[left] {\hspace{-.4 in}0} (5,-0.2);
\draw[->,violet,thick]  (6.3,-0.2) to node[right] {\hspace{.1 in}0} (7,-0.2);

\end{tikzpicture}
         }}
         }
         \resizebox{0.45\textwidth}{!}{
         \subfigure[\empty]{\raisebox{-5mm}{
         \begin{tikzpicture}[thick,scale=1]
\node[circle,draw=black,label={}] (v1) at (-4,0){};
\node[circle,draw=black,label={[xshift=-0.3cm, yshift=-0.1cm]\Large$v'$}] (v2) at (0,0){};
\node[circle,draw=black,label={}] (v3) at (4,0){};
\node[circle,draw=black,label={}] (v4) at (8,0){};
\node[circle,draw=black,label={}] (v5) at (-2,2){};
\node[circle,draw=black,label={\Large$v$}] (v6) at (2,2){};
\node[circle,draw=black,label={}] (v7) at (2,-2){};
\node[circle,draw=black,label={}] (v8) at (6,-2){};

\node[shape=rectangle,draw=black,label={}] (c1) at (-2,0){\color{red}{0}};
\node[shape=rectangle,draw=black,label={[xshift=-0.25cm, yshift=-0.05cm]\Large$c$}] (c2) at (2,0){\color{red}{1}};
\node[shape=rectangle,draw=black,label={}] (c3) at (6,0){\color{red}{0}};
\node[shape=rectangle,draw=black,label={}] (c4) at (0,2){\color{red}{0}};
\node[shape=rectangle,draw=black,label={}] (c5) at (4,-2){\color{red}{0}};

\path [-,thick] (v1) edge node[left] {} (c1); 
\path [-,thick] (v2) edge node[left] {} (c1);
\path [-,thick] (v2) edge node[left] {} (c2);
\path [-,thick] (v3) edge node[left] {} (c2);
\path [-,thick] (v4) edge node[left] {} (c3);
\path [-,thick] (v3) edge node[left] {} (c3);
\path [-,thick] (v5) edge node[left] {} (c4);
\path [-,thick] (v2) edge node[left] {} (c4);
\path [-,thick] (v6) edge node[left] {} (c2);
\path [-,thick] (v7) edge node[left] {} (c2);
\path [-,thick] (v7) edge node[left] {} (c5);
\path [-,thick] (v8) edge node[left] {} (c5);


\draw[->,blue,thick]  (-3.8,-0.2) to node[right]{\hspace{.1 in}0}(-3,-0.2);
\draw[->,blue,thick]  (-0.2,-0.2) to node[left] {\hspace{-.4 in}0} (-1,-0.2);
\draw[->,blue,thick]  (0.2,0.2) to node[above] {\rotatebox{90}{\hspace{.1 in}0}}(0.2,1);
\draw[->,blue,thick]  (-1.8,1.8) to node[right] {{\hspace{.1 in}0}}(-1,1.8);
\draw[->,blue,thick]  (0.2,-0.2) to node[right] {\hspace{.1 in}0}(1,-0.2);
\draw[->,blue,thick]  (2.2,1.8) to node[below] {\rotatebox{90}{\hspace{-.2 in}0}}(2.2,1);
\draw[->,blue,thick]  (2.2,-1.8) to node[above] {\rotatebox{90}{\hspace{.1 in}0}}(2.2,-1);
\draw[->,blue,thick]  (2.2,-1.8) to node[right] {{\hspace{.1 in}1}}(3,-1.8);
\draw[->,blue,thick]  (5.8,-1.8) to node[left] {{\hspace{-.4 in}0}}(5,-1.8);
\draw[->,blue,thick]  (3.8,-0.2) to node[left] {\hspace{-.4 in}0} (3,-0.2);
\draw[->,blue,thick]  (4.2,-0.2) to node[right] {\hspace{.1 in}1} (5,-0.2);
\draw[->,blue,thick]  (7.8,-0.2) to node[left] {\hspace{-.4 in}0} (7,-0.2);

\end{tikzpicture}
             }}\quad\quad 
         \hfill
         \subfigure[\empty]{\raisebox{-5mm}{
         \begin{tikzpicture}[thick,scale=1]
\node[circle,draw=black,label={}] (v1) at (-4,0){};
\node[circle,draw=black,label={[xshift=-0.3cm, yshift=-0.1cm]\Large$v'$}] (v2) at (0,0){};
\node[circle,draw=black,label={}] (v3) at (4,0){};
\node[circle,draw=black,label={}] (v4) at (8,0){};
\node[circle,draw=black,label={}] (v5) at (-2,2){};
\node[circle,draw=black,label={\Large$v$}] (v6) at (2,2){};
\node[circle,draw=black,label={}] (v7) at (2,-2){};
\node[circle,draw=black,label={}] (v8) at (6,-2){};

\node[shape=rectangle,draw=black,label={}] (c1) at (-2,0){\color{red}{0}};
\node[shape=rectangle,draw=black,label={[xshift=-0.25cm, yshift=-0.05cm]\Large$c$}] (c2) at (2,0){\color{red}{1}};
\node[shape=rectangle,draw=black,label={}] (c3) at (6,0){\color{red}{0}};
\node[shape=rectangle,draw=black,label={}] (c4) at (0,2){\color{red}{0}};
\node[shape=rectangle,draw=black,label={}] (c5) at (4,-2){\color{red}{0}};

\path [-,thick] (v1) edge node[left] {} (c1); 
\path [-,thick] (v2) edge node[left] {} (c1);
\path [-,thick] (v2) edge node[left] {} (c2);
\path [-,thick] (v3) edge node[left] {} (c2);
\path [-,thick] (v4) edge node[left] {} (c3);
\path [-,thick] (v3) edge node[left] {} (c3);
\path [-,thick] (v5) edge node[left] {} (c4);
\path [-,thick] (v2) edge node[left] {} (c4);
\path [-,thick] (v6) edge node[left] {} (c2);
\path [-,thick] (v7) edge node[left] {} (c2);
\path [-,thick] (v7) edge node[left] {} (c5);
\path [-,thick] (v8) edge node[left] {} (c5);


\draw[->,violet,thick]  (-2.3,-0.2) to node[left]{\hspace{-.35 in}0}(-3,-0.2);
\draw[->,violet,thick]  (-1.7,-0.2) to node[right] {\hspace{.08 in}0} (-1,-0.2);
\draw[->,violet,thick]  (0.2,1.65) to node[below] {\rotatebox{90}{\hspace{-.18 in}0}}(0.2,1.05);
\draw[->,violet,thick]  (-.3,1.8) to node[left] {{\hspace{-.4 in}0}}(-1,1.8);
\draw[->,violet,thick]  (1.7,-0.2) to node[left] {\hspace{-.4 in}1}(1,-0.2);
\draw[->,violet,thick]  (2.2,0.3) to node[above] {\rotatebox{90}{\hspace{.1 in}1}}(2.2,1);
\draw[->,violet,thick]  (2.2,-0.3) to node[below] {\rotatebox{90}{\hspace{-.2 in}1}}(2.2,-1);
\draw[->,violet,thick]  (3.7,-1.8) to node[left] {{\hspace{-.4 in}0}}(3,-1.8);
\draw[->,violet,thick]  (4.3,-1.8) to node[right] {{\hspace{.1 in}1}}(5,-1.8);
\draw[->,violet,thick]  (2.3,-0.2) to node[right] {\hspace{.1 in}1} (3,-0.2);
\draw[->,violet,thick]  (5.7,-0.2) to node[left] {\hspace{-.4 in}0} (5,-0.2);
\draw[->,violet,thick]  (6.3,-0.2) to node[right] {\hspace{.1 in}1} (7,-0.2);

\end{tikzpicture}
         }}
         }
\caption{Two iterations of the syndrome-based Gallager-B decoder on an acyclic $(a,0)$-absorbing set. The input syndrome is depicted in red, the outgoing  variable node messages in blue, and the outgoing check node messages in cyan.}
\label{acyclic1}
\end{figure}
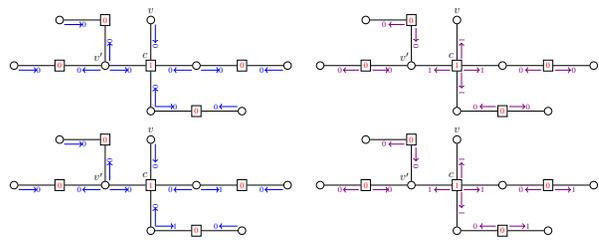
\begin{figure}[!h]
\centering
\resizebox{0.45\textwidth}{!}{
    \subfigure[]{\raisebox{4mm}{
         \begin{tikzpicture}[thick,scale=1]
\node[circle,draw=black,label={}] (v1) at (-2,0){};
\node[circle,draw=black,label={}] (v2) at (0,0){};
\node[circle,draw=black,label={[xshift=-0.2cm, yshift=-0.1cm]$v'$}] (v3) at (2,0){};
\node[circle,draw=black,label={}] (v4) at (4,0){};
\node[circle,draw=black,label={$v$}] (v6) at (1,1){};
\node[circle,draw=black,label={}] (v7) at (1,-1){};
\node[circle,draw=black,label={}] (v8) at (3,-1){};

\node[shape=rectangle,draw=black,label={}] (c1) at (-1,0){};
\node[shape=rectangle,draw=black,label={[xshift=-0.3cm, yshift=-0.1cm]$c$}] (c2) at (1,0){};
\node[shape=rectangle,draw=black,label={}] (c3) at (3,0){};
\node[shape=rectangle,draw=black,label={}] (c5) at (2,-1){};

\path [-,thick] (v1) edge node[left] {} (c1); 
\path [-,thick] (v2) edge node[left] {} (c1);
\path [-,thick] (v2) edge node[left] {} (c2);
\path [-,thick] (v3) edge node[left] {} (c2);
\path [-,thick] (v4) edge node[left] {} (c3);
\path [-,thick] (v3) edge node[left] {} (c3);
\path [-,thick] (v6) edge node[left] {} (c2);
\path [-,thick] (v7) edge node[left] {} (c2);
\path [-,thick] (v7) edge node[left] {} (c5);
\path [-,thick] (v8) edge node[left] {} (c5);


\end{tikzpicture}
         }}
         \hfill \quad
         \subfigure[]{\raisebox{4mm}{
         \begin{tikzpicture}[thick,scale=1]
\node[circle,draw=black,label={$v$}] (v1) at (-2,0){};
\node[circle,draw=black,label={$v'$}] (v2) at (0,0){};
\node[circle,draw=black,label={$v''$}] (v3) at (2,0){};
\node[circle,draw=black,label={}] (v4) at (4,1){};
\node[circle,draw=black,label={}] (v8) at (4,-1){};

\node[shape=rectangle,draw=black,label={$c$}] (c1) at (-1,0){};
\node[shape=rectangle,draw=black,label={}] (c2) at (1,0){};
\node[shape=rectangle,draw=black,label={}] (c3) at (3,1){};
\node[shape=rectangle,draw=black,label={}] (c4) at (3,-1){};

\path [-,thick] (v1) edge node[left] {} (c1); 
\path [-,thick] (v2) edge node[left] {} (c1);
\path [-,thick] (v2) edge node[left] {} (c2);
\path [-,thick] (v3) edge node[left] {} (c2);
\path [-,thick] (v4) edge node[left] {} (c3);
\path [-,thick] (v3) edge node[left] {} (c3);
\path [-,thick] (v8) edge node[left] {} (c4);
\path [-,thick] (v3) edge node[left] {} (c4);
\end{tikzpicture}
         }}
        \hfill \quad\quad
         \subfigure[]
         {\raisebox{4mm}{
         \begin{tikzpicture}[thick,scale=1]
\node[circle,draw=black,label={$v$}] (v1) at (-2,0){};
\node[circle,draw=black,label={$v'$}] (v2) at (0,0){};
\node[circle,draw=black,label={}] (v3) at (2,0){};
\node[circle,draw=black,label={}] (v4) at (4,0){};
\node[circle,draw=black,label={}] (v5) at (-1,1){};
\node[circle,draw=black,label={$v''$}] (v6) at (1,1){};
\node[circle,draw=black,label={}] (v7) at (1,-1){};
\node[circle,draw=black,label={}] (v8) at (3,1){};

\node[shape=rectangle,draw=black,label={$c$}] (c1) at (-1,0){};
\node[shape=rectangle,draw=black,label={[xshift=-0.3cm, yshift=-0.0cm]$c'$}] (c2) at (1,0){};
\node[shape=rectangle,draw=black,label={}] (c3) at (3,0){};
\node[shape=rectangle,draw=black,label={}] (c4) at (0,1){};
\node[shape=rectangle,draw=black,label={}] (c5) at (2,1){};

\path [-,thick] (v1) edge node[left] {} (c1); 
\path [-,thick] (v2) edge node[left] {} (c1);
\path [-,thick] (v2) edge node[left] {} (c2);
\path [-,thick] (v3) edge node[left] {} (c2);
\path [-,thick] (v4) edge node[left] {} (c3);
\path [-,thick] (v3) edge node[left] {} (c3);
\path [-,thick] (v5) edge node[left] {} (c4);
\path [-,thick] (v6) edge node[left] {} (c4);
\path [-,thick] (v6) edge node[left] {} (c2);
\path [-,thick] (v7) edge node[left] {} (c2);
\path [-,thick] (v6) edge node[left] {} (c5);
\path [-,thick] (v8) edge node[left] {} (c5);


\end{tikzpicture}
         }
         }
         }
\vspace{-.05 in}\caption{Graph structures for the remaining subcases. \vspace{-.2 in}}
\label{acyclic2}
\end{figure}
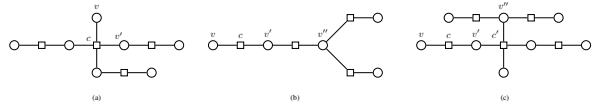

\vspace{0.35cm}

Now consider when $c$ has neighboring variable nodes of degree exactly two (excluding $v$). We consider what happens on any of these branches from $c$ (see Figure~\ref{acyclic2}). The first possibility is when a branch from one of these degree two variable nodes contains all degree two nodes and thus forms a path, as in Figure~\ref{acyclic2}(a). The second is when the closest node to $c$ on the branch that has degree larger than two is a variable node, as in Figure~\ref{acyclic2}(b). The third is when the closest node to $c$ on the branch that has degree larger than two is a check node, as in Figure~\ref{acyclic2}(c). In all these subcases, any degree two check node adjacent to $c$ always receives 0 from their neighbors (extrinsic to $c$). Thus, all incoming messages to $c$ are 0, resulting in a mismatched syndrome.
\end{proof}


\begin{theo}
Consider an $(a,0)$-absorbing set $\cA$ such that $\cG_{\cA}$ forms a cycle. Then $\cG_{\cA}$ is a trapping set. Labelling the graph as in the following diagram, we have the following two cases:

\begin{figure}[!h]
\centering 
\resizebox{0.3\textwidth}{!}{
    \begin{tikzpicture}[thick,scale=0.8]



\foreach \index in {1}
{
    \node[circle, draw=black, label={right: \footnotesize{$v_{\index}$}}] (v1) at ({-((\index-1)*72)}:2) {};
}

\foreach \index in {2}
{
    \node[circle, draw=black, label={below: \footnotesize{$v_{\index}$}}] (v2) at ({-((\index-1)*72)}:2) {};
}

\foreach \index in {3}
{
    \node[circle, draw=black, label={left: \footnotesize{$v_{\index}$}}] (v3) at ({-((\index-1)*72)}:2) {};
}

\node[circle, draw=black, label={above: \footnotesize{$v_{a-1}$}}] (v4) at ({-((4-1)*72)}:2) {};

\node[circle, draw=black, label={above: \footnotesize{$v_{a}$}}] (v5) at ({-((5-1)*72)}:2) {};

\foreach \index in {1}
{
    \node[shape=rectangle, draw=black, label={right: \footnotesize{$c_{\index}$}}] (c\index) at ({-((\index-1)*72+36)}:2) {};
}

\foreach \index in {2}
{
    \node[shape=rectangle, draw=black, label={below: \footnotesize{$c_{\index}$}}] (c\index) at ({-((\index-1)*72+36)}:2) {};
}

\foreach \index in {3}
{
    \node[shape=rectangle, draw=black, label={left: \footnotesize{$c_{\index}$}}] (c\index) at ({-((\index-1)*72+36)}:2) {};
}

\node[shape=rectangle, draw=black, label={above: \footnotesize{$c_{a-1}$}}] (c4) at ({-((4-1)*72+36)}:2) {};


\node[shape=rectangle, draw=black, label={right: \footnotesize{$c_{a}$}}] (c5) at ({-((5-1)*72+36)}:2) {};

\foreach \index in {1, ...,2,3}
{
    \path [-,thick] (v\index) edge node[left] {} (c\index);  
}
\foreach \index in {1, 2}
{
    \path [-,thick] (c\index) edge node[left] {} (v\the\numexpr\index+1\relax);  
}

\foreach \index in {4,5}
{
    \path [-,thick] (v\index) edge node[left] {} (c\index);  
}
\foreach \index in {4}
{
    \path [-,thick] (c\index) edge node[left] {} (v\the\numexpr\index+1\relax);  
}

\foreach \index in {3}
{
    \path [dashed,thick] (c\index) edge node[left] {} (v\the\numexpr\index+1\relax);  
}

\path [-,thick] (c5) edge node[left] {} (v1);  

\end{tikzpicture}
         }
\vspace{-.05 in}\caption{Cycle $\cG_{\cA}$ formed by an $(a,0)$-absorbing set $\cA$}
\label{fig:cycle}
\end{figure}

\begin{center}

    \begin{alphalist}
        \item When $a$ is even, any singleton forms a failure inducing set.\\
        \item When $a$ is odd, $\{v_1, v_2, v_{\frac{a+3}{2}}\}$ forms a failure inducing set.
    \end{alphalist}
\end{center}
\end{theo}

\begin{proof}
    We first consider the case where $a$ is even. Without loss of generality suppose $v_1$ is in error. Then the input syndrome is 
    $\sigma = (1, 0,0, \dots, 0, 1).$
   The decoder starts with all variable nodes sending $0$ to their neighbors. Given the decoding algorithm rules, $c_1$ and $c_a$ send $0$ to their neighbors. All other check nodes send $0$. Since each variable node is of degree $2$, each variable node swaps the message it receives from its neighbors. That is, $v_i$ sends what it receives from $c_{i-1}$ to $c_i$ and vice versa. Because of this, $v_1$ sends $1$ to both its neighbors, $v_2$ sends $0$ to $c_1$ and $1$ to $c_2$, $v_a$ sends $0$ to $c_a$ and $1$ to $c_{a-1}$, and all other variable nodes send $0$. Since all check nodes are also degree $2$ and have syndrome value $0$ except for $c_1$ and $c_a$, this behavior moves around the cycle. That is, at iteration $i$ for $2 \leq i \leq \frac{a+2}{2}$, $v_1$ sends $1$ to its neighbors, $v_i$ sends $0$ to $c_{i-1}$ and $1$ to $c_i$, $v_{a-(i-2)}$ sends $0$ to $c_{a-(i-2)-1}$ and $1$ to $c_{a-(i-2)}$, and all other variable nodes send $0$. For $\frac{a}{2}+1 \leq i \leq a$, $v_1$ sends $1$ to its neighbors, $v_i$ sends $1$ to $c_{i-1}$ and $0$ to $c_i$, $v_{a-(i-2)}$ sends $1$ to $c_{a-(i-2)-1}$ and $0$ to $c_{a-(i-2)}$, and all other variable nodes send $0$. At iteration $a+1$ all variable nodes send $0$, and the decoder is back at the first step of the decoding. Therefore $\sigma^{\ell} \neq \sigma$ for any iteration $\ell$, and hence any singleton is a failure inducing set. 

Now consider the case when $a$ is odd.
The set $\{v_1, v_2, v_{\frac{a+3}{2}}\}$ corresponds to the error vector $e$ where 

$$e_j = \begin{cases}
1, & j=1,2, \frac{a+3}{2},\\
0, & \text{else}.\\
\end{cases}$$

The corresponding input syndrome $\sigma$ is
$$\sigma_j = \begin{cases}
1, & j=2, \frac{a+3}{2}-1, \frac{a+3}{2}, a,\\
0, & \text{else}.\\
\end{cases}$$

We show that on iteration $j=\frac{a+3}{2}$ the estimated syndrome $\hat{\sigma}^{(\frac{a+3}{2})}$ matches the input syndrome $\sigma$ and returns an estimated error $\hat{e}=\vec{0}$. However, $e \oplus \hat{e}=e$ is not in the rowspace of $H$, and hence there is a logical error. That is, there is a decoding failure. 

Since the variable nodes and check nodes are all degree two, at each stage of decoding the variable nodes swap the messages they receive from their neighbors. Similarly, all check nodes with input syndrome value $0$ also swap the messages they receive from their neighbors. Starting with all variable nodes initially sending $0$ to their neighbors, all check nodes correspondingly send $0$ back to their neighbors except for $c_2$, $c_{\frac{a+3}{2}-1}$, $c_{\frac{a+3}{2}}$, and $c_a$. At iteration $2$, $c_2$ sends $1$ to $v_3$ and this $1$ moves around the cycle such that at iteration $\frac{a+3}{2}$, $1$ is sent from $v_{\frac{a+3}{2}-1}$ to $c_{\frac{a+3}{2}-1}$. Likewise, at iteration $\frac{a+3}{2}$, $v_{\frac{a+3}{2}+1}$ sends $1$ to $c_{\frac{a+3}{2}}$. All check nodes besides $c_2$, $c_a$, $c_{\frac{a+3}{2}-1}$, and $c_{\frac{a+3}{2}}$ receive zeroes from both neighbors, so the estimated syndrome at iteration $\frac{a+3}{2}$ matches the input syndrome. When computing the estimated error, $v_1$, $v_2$, $v_{\frac{a+3}{2}-1}$, and $v_{\frac{a+3}{2}+1}$ each receive a $0$ and a $1$ from their neighbors. All other variable nodes receive $0$ from both neighbors. The tie-breaking rule for the Gallager $B$ decoding algorithm results in an estimated error of $\hat{e}=\vec{0}$. However, $e \oplus \hat{e}=e$ is not in the rowspace of $H$, so there is a decoding failure.
\end{proof}

\section{Beyond Symmetric Stabilizers}\label{section-ss}

\subsection{Symmetric Stabilizers} 

It was shown in~\cite{raveendran2021trapping} 
that Tanner graphs with symmetric substructures are particularly detrimental to iterative decoders. This is because the symmetry will force the decoder to oscillate between errors with the same syndrome. 
If said substructure stems from a stabilizer, then it will capture degenerate errors.
This scenario is the counterpart of Theorem~\ref{all_even_checks} which deals with non-stabilizers.
\begin{defi}\cite[Def.~5]{raveendran2021trapping}\label{D-ss}
    A \emph{symmetric stabilizer} is a stabilizer with the set of variable nodes, whose induced subgraph has no odd degree check nodes, and that it can be partitioned into an even  number of disjoint subsets such that 
    \begin{alphalist}
        \item subgraphs induced by these subsets of variable nodes are isomorphic, and,
        \item each subset has the same set of odd degree check nodes neighbors in its induced graph.
    \end{alphalist}
\end{defi}
It follows directly by the definition that a symmetric stabilizer has only even degree check nodes. Thus, the following is immediate.
\begin{theo}\label{T-ss}
    A symmetric stabilizer $S$ is an $(|S|,0)$-absorbing set.
\end{theo}

\begin{rem}
    We have already seen that $(a,0)$-absorbing sets force the decoder to oscillate between errors, which given the work of~\cite{raveendran2021trapping} and the above result, comes as no surprise.
    However, having no odd degree check nodes only partially explains the decoder failure. As we will show in Theorem~\ref{absorbing_union}, the decoder somewhat exclusively depends on the structure of the shared check nodes between the isomorphic constituents. Furthermore, the same theorem shows that the isomorphism condition can be dropped, and the number of constituents can be even or odd; see also Example~\ref{exa-3noniso}.
\end{rem}

\begin{rem}
    The constituents of a symmetric stabilizer may or may not be absorbing sets of their own right (see Figure~\ref{fig:SymmStab} when they are and Figure~\ref{F-10ss} when they are not). 
    If the constituents do form absorbing sets, then they necessarily have $b\geq 1$ odd degree check nodes, and therefore form failure inducing sets by Theorem~\ref{T-oddcheck}.
\end{rem}

\begin{exa}
\begin{figure}[!h]
\centering
\resizebox{0.3\textwidth}{!}{
\begin{tikzpicture}[thick,scale=0.8]


\foreach \index in {1, ..., 9, 10}
{
    \node[circle, draw=black, minimum size=4.5mm] (v\index) at ({-((\index-1)*36)+90}:4.5) {};
}

\foreach \index in {1, ..., 9, 10}
{
    \node[shape=rectangle, draw=black, minimum size=3mm] (c\index) at ({-((\index-1)*36+18)+90}:4.5) {};
}

\foreach \index in {1, ..., 9, 10}
{
    \path [-,thick] (v\index) edge node[left] {} (c\index);  
}
\foreach \index in {1, 2, ..., 9}
{
    \path [-,thick] (c\index) edge node[left] {} (v\the\numexpr\index+1\relax);  
}
\path [-,thick] (c10) edge node[left] {} (v1);  

\foreach \index in {0,1,..., 4}
{
    \node[shape=rectangle, draw=black, minimum size=3mm] (c\the\numexpr\index*2+11\relax) at ({100-72*\index}:3.5) {};
    \node[shape=rectangle, draw=black, minimum size=3mm] (c\the\numexpr\index*2+12\relax) at ({80-72*\index}:3.5) {};
}
\foreach \index in {1,3,5,7,9}
{
    \path [-,thick] (v\index) edge node[left] {} (c\the\numexpr\index+10\relax);
    \path [-,thick] (v\index) edge node[left] {} (c\the\numexpr\index+11\relax);  
}
\path [-,thick] (v8) edge node[left] {} (c11);
\path [-,thick] (v8) edge node[left] {} (c16);
\path [-,thick] (v4) edge node[left] {} (c12);
\path [-,thick] (v4) edge node[left] {} (c17);
\path [-,thick] (v10) edge node[left] {} (c13);
\path [-,thick] (v10) edge node[left] {} (c18);
\path [-,thick] (v6) edge node[left] {} (c14);
\path [-,thick] (v6) edge node[left] {} (c19);
\path [-,thick] (v2) edge node[left] {} (c15);
\path [-,thick] (v2) edge node[left] {} (c20);

\foreach \index in {0,1,..., 4}
{
    \node[shape=rectangle, draw=black, minimum size=3mm] (c\the\numexpr\index+21\relax) at ({90-72*\index}:1.5) {};
}
\path [-,thick] (c21) edge node[left] {} (v1);
\path [-,thick] (c21) edge node[left] {} (v6);
\path [-,thick] (c22) edge node[left] {} (v3);
\path [-,thick] (c22) edge node[left] {} (v8);
\path [-,thick] (c23) edge node[left] {} (v5);
\path [-,thick] (c23) edge node[left] {} (v10);
\path [-,thick] (c24) edge node[left] {} (v2);
\path [-,thick] (c24) edge node[left] {} (v7);
\path [-,thick] (c25) edge node[left] {} (v4);
\path [-,thick] (c25) edge node[left] {} (v9);

\end{tikzpicture}
}
\caption{$(10,0)$ symmetric stabilizer.}
\label{F-10ss}
\end{figure}
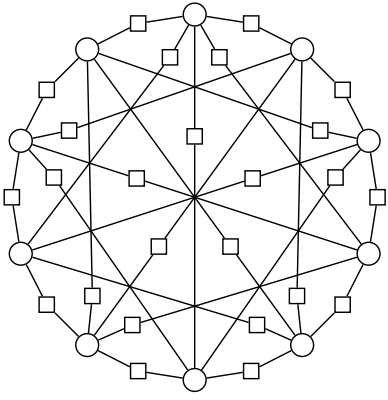

Consider the $(10,0)$-symmetric stabilizer shown in Figure \ref{F-10ss}, first presented in \cite{raveendran2021trapping}. The entire collection of variable nodes forms a $(10,0)$-absorbing set. This graph also contains $\binom{10}{6}= 210$ smaller $(6,12)$-absorbing sets. However, these are the smallest absorbing sets contained within the graph. Thus it is not possible to partition the variable nodes into a disjoint union of absorbing sets. \hfill $\Box$
\end{exa}


For the reminder of this section, we show that the isomorphism condition in Definition~\ref{D-ss} is not necessary and that symmetric stabilizers are themselves a special case of a far more general phenomena. First we recall some necessary notation. Given a Tanner graph $\cG$ and subset $\mathcal{A}\subseteq V(\cG)$ with induced subgraph $\cG_{\mathcal{A}}$, let $\mathcal{O}_{\mathcal{A}}$ denote the set of check nodes in $\cG_{\mathcal{A}}$ with odd degree and $\mathcal{E}_{\mathcal{A}}$ denote the set of check nodes in $\cG_{\mathcal{A}}$ with even degree.

\begin{lem}\label{odd_check_partition}
Given an $(a,0)$-absorbing set $\cA$, if $\cA$ can be partitioned into disjoint absorbing sets $\cA = \cA_1 \cup \cA_2$ for $\cA_1$ an $(a_1, b_1)$-absorbing set and $\cA_2$ an $(a_2, b_2)$-absorbing set, then the odd degree checks of $\cG_{\cA_1}$ and $\cG_{\cA_2}$ are necessarily the same and $b_1=b_2$. 
\end{lem}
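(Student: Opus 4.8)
The plan is to run a parity argument on check-node degrees, exploiting the fact that partitioning the variable nodes into $\cA_1$ and $\cA_2$ splits, rather than duplicates, the edges incident to each check node. First I would fix an arbitrary check node $c \in \mathcal{N}(\cA)$ and write $\deg_\cA(c)$, $\deg_{\cA_1}(c)$, $\deg_{\cA_2}(c)$ for its degrees in $\cG_\cA$, $\cG_{\cA_1}$, $\cG_{\cA_2}$ respectively, with the convention $\deg_{\cA_i}(c)=0$ when $c \notin \mathcal{N}(\cA_i)$. Because $\cA = \cA_1 \sqcup \cA_2$ is a \emph{disjoint} union of variable nodes, every edge incident to $c$ in $\cG_\cA$ lands in exactly one of $\cG_{\cA_1}$, $\cG_{\cA_2}$, yielding the key additivity identity $\deg_\cA(c) = \deg_{\cA_1}(c) + \deg_{\cA_2}(c)$.

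Next I would invoke the hypothesis $b=0$: since $\cA$ is an $(a,0)$-absorbing set, $\cG_\cA$ has no odd degree check nodes, so $\deg_\cA(c)$ is even for every $c$. Combined with the additivity identity, this forces $\deg_{\cA_1}(c) \equiv \deg_{\cA_2}(c) \pmod 2$; that is, $c$ has odd degree in $\cG_{\cA_1}$ precisely when it has odd degree in $\cG_{\cA_2}$.

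From here the statement about $\mathcal{O}_{\cA_1}$ and $\mathcal{O}_{\cA_2}$ follows by a short containment argument. If $c \in \mathcal{O}_{\cA_1}$, then $\deg_{\cA_1}(c)$ is odd, hence so is $\deg_{\cA_2}(c)$; in particular $\deg_{\cA_2}(c) \geq 1$, so $c \in \mathcal{N}(\cA_2)$ and $c \in \mathcal{O}_{\cA_2}$. This gives $\mathcal{O}_{\cA_1} \subseteq \mathcal{O}_{\cA_2}$, and the symmetric argument gives the reverse inclusion, so $\mathcal{O}_{\cA_1} = \mathcal{O}_{\cA_2}$. Taking cardinalities then yields $b_1 = |\mathcal{O}_{\cA_1}| = |\mathcal{O}_{\cA_2}| = b_2$.

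The only place the argument could slip is the bookkeeping for check nodes lying in $\mathcal{N}(\cA_1)$ but not $\mathcal{N}(\cA_2)$ (or vice versa), and this is exactly what the $\deg_{\cA_i}(c)=0$ convention is designed to absorb: zero is even, so an odd degree appearing in one part can never be paired with an absent (degree-zero) check in the other, which is precisely why the odd-degree sets are forced to coincide rather than merely overlap. I do not anticipate a genuine obstacle here; the entire content is the single observation that disjointness of the variable-node partition makes check-node degrees additive, after which the $b=0$ hypothesis does all the remaining work.
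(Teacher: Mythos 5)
Your proof is correct and follows essentially the same route as the paper's: the additivity of check-node degrees over the disjoint partition, the parity argument from $b=0$, and the two-way containment $\mathcal{O}_{\cA_1}\subseteq\mathcal{O}_{\cA_2}\subseteq\mathcal{O}_{\cA_1}$. Your explicit handling of the degree-zero convention for checks not in $\mathcal{N}(\cA_i)$ is a small bit of added care, but the argument is the same.
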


\begin{proof}
Consider a check node $c \in \cO_{\cA_1}$. Then 

$$\deg_{\cG_{\cA_1}}(c)=\deg_{\cG_{\cA}}(c)-\deg_{\cG_{\cA_2}}(c).$$

Since $\cA$ is an $(a,0)$-absorbing set, $\deg_{\cG_{\cA}}(c)$ is even and thus $\deg_{\cG_{\cA_2}}(c)$ is odd. Therefore $c \in \cO_{\cA_2}$ and $\cO_{\cA_1} \subseteq \cO_{\cA_2}$. Similarly, $\cO_{\cA_2} \subseteq \cO_{\cA_1}$ so $\cO_{\cA_1}=\cO_{\cA_2}$ and $b_1 =b_2$. 
\end{proof}

\begin{theo}\label{absorbing_union} Consider a Tanner graph $\cG$ containing an $(a,0)$-absorbing set $\mathcal{A}$ such that for every 
$c \in \cN(\cA)$, $\mathcal{N}_{\cG}(c) \subseteq \cA$. Suppose that $\mathcal{A}=\cA_1\cup \cA_2$ where each $\mathcal{A}_i$ is an $(a_i,b)$-absorbing set for $b \ge 1$ and $\mathcal{A}_1 \cap \mathcal{A}_2 = \emptyset$.
Then $\mathcal{A}_1$ (equivalently, $\mathcal{A}_2$) is a failure inducing set.
\end{theo}
\begin{proof}
Without loss of generality suppose $\cA_1$ is in error. Then the input syndrome $\sigma$ has values 

$$\sigma_{j} = \begin{cases}
    1, &\: \: \: \text{for check nodes} \: c_j \in \mathcal{O}_{\cA_1},\\
    0, & \: \: \: \text{else}.
\end{cases}$$

By Lemma \ref{odd_check_partition}, since $\mathcal{A}$ is $(a,0)$-absorbing, the $b$ odd degree check nodes in $\cA_1$ and $\cA_2$ coincide. Thus $\sigma_{j}$ also has value $1$ for the odd degree check nodes $c_j \in \mathcal{O}_{\mathcal{A}_2}$.

Consider $c \in \mathcal{O}_{\cA_1}$. 
In the first iteration of decoding, all variable nodes send $0$ to their neighbors. After this step, $c$ sends $1$ to all its neighbors since $\sigma_c=1$. 

Let $v \in \mathcal{N}(c)$. Then $v \in \cA_1$ or $v \in \cA_2$ but not both. Since both $\cA_1$ and $\cA_2$ are absorbing sets, $v$ has strictly more even degree than odd degree check nodes. 
Thus, for all $c' \in \mathcal{N}(v)$, the set $\mathcal{N}(v)\setminus \{c'\}$ has at most the same number of even degree and odd degree check nodes. 
Equivalently, for all $c' \in \mathcal{N}(v)$, at most the same number of check nodes in $\mathcal{N}(v)\setminus \{c'\}$ send ones as zeros. Hence, in all cases $v$ sends $0$ to all its neighbors. Additionally, all variable nodes not adjacent to check nodes with corresponding input syndrome value 1 send $0$ to their neighbors.
This means that initially $v$ receives zeros from all even degree neighbors and ones from odd degree neighbors. Hence $v$ sends $0$ to all its neighbors. Moreover, $v$ always sends $0$ to all its neighbors since it receives ones only from its odd degree neighbors. 
Thus $v$ always sends $0$ to $c$. Since $v$ was arbitrary, $\hat{\sigma}_c=0$ for all iterations, resulting in decoding failure. Specifically, the estimated syndrome fails to match the input syndrome at precisely the $b$ odd degree check nodes in $\mathcal{O}_{\mathcal{A}_1}$.
\end{proof}

\begin{exa} Figure~\ref{fig:SymmStab} is an example of a symmetric stabilizer whose constituents $\cA_1=\{v_1, v_2, v_5, v_6\}$ and $\cA_2=\{v_3, v_4, v_7, v_8\}$ form $(4,2)$-absorbing sets with isomorphic induced subgraphs. Under the hypotheses of Theorem $\ref{absorbing_union}$, each of these absorbing sets will be failure inducing with respect to $\cG$. However, Theorem $\ref{absorbing_union}$ does not require $\cG_{\cA_1}$ and $\cG_{\cA_2}$ to be isomorphic and thus situates symmetric stabilizers that can be partitioned in this way into a much larger class of harmful graphical structures. \hfill $\Box$
\end{exa}

\begin{figure}[!h]
\centering
\resizebox{0.3\textwidth}{!}{
    \begin{tikzpicture}[scale=.7]
\node[circle,draw=black, scale=0.75, label={above:\footnotesize{$v_1$}}] (v1) at (-3,2){};
\node[circle,draw=black, scale=0.75, label={above:\footnotesize{$v_2$}}] (v2) at (-1,2){};
\node[circle,draw=black, scale=0.75, label={above:\footnotesize{$v_3$}}] (v3) at (1,2){};
\node[circle,draw=black, scale=0.75, label={above:\footnotesize{$v_4$}}] (v4) at (3,2){};
\node[circle,draw=black, scale=0.75, label={below:\footnotesize{$v_5$}}] (v5) at (-3,0){};
\node[circle,draw=black, scale=0.75, label={below:\footnotesize{$v_6$}}] (v6) at (-1,0){};
\node[circle,draw=black, scale=0.75, label={below:\footnotesize{$v_7$}}] (v7) at (1,0){};
\node[circle,draw=black, scale=0.75, label={below:\footnotesize{$v_8$}}] (v8) at (3,0){};





\node[shape=rectangle,draw=black,scale=0.75] (c1) at (-2,2){};
\node[shape=rectangle,draw=black,scale=0.75] (c2) at (0,2){};
\node[shape=rectangle,draw=black,scale=0.75] (c3) at (2,2){};

\node[shape=rectangle,draw=black,scale=0.75] (c4) at (-3,1){};
\node[shape=rectangle,draw=black,scale=0.75] (c5) at (-1,1){};
\node[shape=rectangle,draw=black,scale=0.75] (c6) at (1,1){};
\node[shape=rectangle,draw=black,scale=0.75] (c7) at (3,1){};

\node[shape=rectangle,draw=black,scale=0.75] (c8) at (-2,0){};
\node[shape=rectangle,draw=black,scale=0.75] (c9) at (0,0){};
\node[shape=rectangle,draw=black,scale=0.75] (c10) at (2,0){};

\path [-,thick] (v1) edge node[left] {} (c1); \path [-,thick] (c1) edge node[left] {} (v2);
\path [-,thick] (v2) edge node[left] {} (c2); \path [-,thick] (c2) edge node[left] {} (v3);
\path [-,thick] (v3) edge node[left] {} (c3); \path [-,thick] (c3) edge node[left] {} (v4);

\path [-,thick] (v1) edge node[left] {} (c4); \path [-,thick] (c4) edge node[left] {} (v5);
\path [-,thick] (v2) edge node[left] {} (c5); \path [-,thick] (c5) edge node[left] {} (v6);
\path [-,thick] (v3) edge node[left] {} (c6); \path [-,thick] (c6) edge node[left] {} (v7);
\path [-,thick] (v4) edge node[left] {} (c7); \path [-,thick] (c7) edge node[left] {} (v8);

\path [-,thick] (v5) edge node[left] {} (c8); \path [-,thick] (c8) edge node[left] {} (v6);
\path [-,thick] (v6) edge node[left] {} (c9); \path [-,thick] (c9) edge node[left] {} (v7);
\path [-,thick] (v7) edge node[left] {} (c10); \path [-,thick] (c10) edge node[left] {} (v8);

\end{tikzpicture}
         }
\vspace{-.05 in}\caption{A symmetric stabilizer that can be partitioned into two disjoint $(4,2)$-absorbing sets}
\label{fig:SymmStab}
\end{figure}
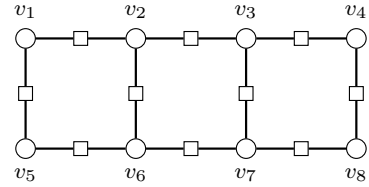

\subsection{Extending the absorbing set partition}

The results in Theorem \ref{absorbing_union} give one instance of when failure inducing sets occur, namely whenever an $(a,0)$-absorbing set embedded in a larger graph can be decomposed into two smaller absorbing sets that intersect precisely at their odd degree check nodes. We present variations of this setting below.

\begin{theo}\label{embedded_union} \textbf{(Union embedded in larger graph, intersecting odd degree checks are subset)}
Consider a Tanner graph $\cG$ containing an $(a_1,b_1)$-absorbing set $\mathcal{A}_1$ and an $(a_2,b_2)$-absorbing set $\mathcal{A}_2$ such that the following conditions hold: 
\begin{alphalist}
    \item $b_1, b_2 \geq 1$
    \item $\mathcal{A}_1 \cap \mathcal{A}_2  = \emptyset$
    \item For every check node $c \in \cE_{\cA_i}, \cN_{\cG}(c)\subseteq \cA_i$.
    \item The intersection $\mathcal{D}:= \mathcal{O}_{\mathcal{A}_1} \cap \mathcal{O}_{\mathcal{A}_2}$ is nonempty.
\end{alphalist}
Under these conditions, the set $\mathcal{A}_1$ (equivalently, $\mathcal{A}_2$) is failure inducing.
\end{theo}

\begin{proof}
    In the first iteration of decoding all variable nodes send 0. In the next iteration, all check nodes $c \in \mathcal{O}_{\cA_1}$ send $1$ and the rest of the check nodes send $0$. Since $\cA_1$ and $\cA_2$ are absorbing sets, the degree condition on the variable nodes results in all nodes $v \in \cA_1 \cup \cA_2$ sending $0$. 
    Next, since $\cN_{\cG}(c)\subseteq \cA_i$ for all $c \in \cE_{\cA_i}$, all check nodes in $\cE_{\cA_1}\cup \cE_{\cA_2}$ receive $0$ and hence send $0$. Thus, we again have the check nodes $c \in \mathcal{O}_{\cA_1}$ sending $1$ and the check nodes in $\cE_{\cA_1}\cup\cE_{\cA_2}$ sending $0$. 
    Hence, in all decoding iterations, the variable nodes in $\cA_1 \cup \cA_2$ send 0 to all their neighbors. In particular, for each check node $c \in \mathcal{D}$ its neighbors $v \in \mathcal{N}_{\cG}(c)$ always send $0$ to $c$. However, $\sigma_c=1$ since $\cA_1$ is in error. Hence $\hat{\sigma}_c \neq \sigma_c$ and we have decoding failure.
\end{proof}
\begin{rem}
Note that in Theorem~\ref{absorbing_union} we require $\mathcal{D}=\mathcal{O}_{\cA_1}=\mathcal{O}_{\cA_2}$. Theorem~\ref{embedded_union} shows that this condition can be relaxed by allowing $\mathcal{D} \subseteq \mathcal{O}_{\cA_1}$ and $\mathcal{D} \subseteq \mathcal{O}_{\cA_2}$, providing this way a further generalization of Definition~\ref{D-ss}.
\end{rem}
\begin{theo}\label{connected_by_single_path} \textbf{(Connected via a single path)} Consider a Tanner graph $\mathcal{G}$ containing a subgraph $\mathcal{H} = \cG_1 \cup \cP \cup \cG_2$ for $V(\cG_1)$ an $(a_1,b_1)$-absorbing set $\mathcal{A}_1$ and $V(\cG_2)$ an $(a_2,b_2)$-absorbing set $\mathcal{A}_2$ with $b_1, b_2 \geq 1$, $\mathcal{A}_1 \cap \mathcal{A}_2 = \emptyset$, and $\cP$ a path connecting a variable node in $\mathcal{A}_1$ to a variable node in $\mathcal{A}_2$. Further suppose for all check nodes $c \in \cE_{\cA_i}$, $\mathcal{N}_{\cG}(c) \subseteq \cA_i$. Then $\mathcal{A}_1$ (equivalently, $\mathcal{A}_2$) is failure inducing.
\end{theo}

\begin{proof}
Without loss of generality suppose $\mathcal{A}_1$ is in error. Let $v \in \cG_1 \cap \cP$, $w \in \cG_2 \cap \cP$, and consider $c \in \mathcal{N}(v) \cap \cP$. Since $\mathcal{A}_1$ is in error, $\sigma_c = 1$. 

The decoding begins with all variable nodes sending $0$. Thus all nodes $c \in \cO_{\cA_1}$ send $1$ to their neighbors and the rest of the check nodes in $\cG$ send $0$. Since $\cA_1$ and $\cA_2$ are absorbing sets, the degree condition on the variable nodes results in all variable nodes in  $\cG_1 \cup \cG_2$ sending $0$. Next, since $\cN_{\cG}(c)\subseteq \cA_i$ for all $c \in \cE_{\cA_i}$, all check nodes in $\cE_{\cA_1}\cup \cE_{\cA_2}$ receive $0$ and hence send $0$. Thus, again we have the check nodes $c \in \mathcal{O}_{\cA_1}$ sending $1$ and the check nodes in $\cE_{\cA_1}\cup\cE_{\cA_2}$ sending $0$. Hence in all decoding iterations, the variable nodes in $\cG_1 \cup \cG_2$ send 0.

Since all nodes in $\cP$ (with the possible exception of $v$ and $w$) are degree 2 and $\sigma_c = 1$, eventually nodes $v$ and $w$ receive $1$ from their check node neighbors in $P$. However, $v \in \cA_1$ and $w \in \cA_2$ again implies that $v$ and $w$ always send $0$ to their neighbors, resulting in $c$ always receiving $0$ from its neighbors. This implies $\hat{\sigma}_c \neq \sigma_c$ and we have decoding failure.
\end{proof}

In the next result we characterize the decoding performance of a specific union of graphs known as dumbbell graphs \cite{kelley2008ldpc}.

\begin{exa}\label{dumbbell}
Define an $(a_1, a_2 ; b)$-\emph{dumbbell graph}, denoted $D(a_1,a_2;b)$ to be a connected graph consisting of two edge-disjoint cycles $A_1$ and $A_2$ of lengths $a_1 \geq 1$ and $a_2 \geq 1$, respectively, that are connected by a path $B$ of length $b \geq 0$.  

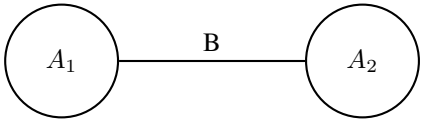
\begin{figure}[!h]
\centering
\begin{tikzpicture}[thick, scale=0.8]
  \node[circle, draw, minimum size=1.5cm] (A1) at (0,0) {$A_1$};
  \node[circle, draw, minimum size=1.5cm] (A2) at (5,0) {$A_2$};
  
  
  \draw[-] (A1) -- (A2) node[midway, above] {B};
\end{tikzpicture}
\caption{Dumbbell graph}
\label{fig-dumbbell}
\end{figure}

Consider a Tanner graph $\cG$ containing a $D(a_1,a_2;b)$ dumbbell graph such that that $A_1$ and $B$ intersect at a variable node and $A_2$ and $B$ intersect at a variable node. Note that the variable nodes of $A_1$ form an $(\frac{a_1}{2},1)$-absorbing set $\cA_1$ and the variable nodes of $A_2$ form an $(\frac{a_2}{2},1)$-absorbing set $\mathcal{A}_2$. Then $\mathcal{A}_1$ (equivalently, $\mathcal{A}_2$) is failure inducing. \hfill $\Box$
\end{exa}

We extend the results of Theorem \ref{connected_by_single_path} by allowing for multiple paths between $\cG_1$ and $\cG_2$. The proof is similar. 

\begin{theo}\label{connected_via_multiple_paths}
\textbf{(Connected via multiple paths)} Consider a Tanner graph $\cG$ containing a subgraph $\cH = \cG_1 \cup \cP(\cG_1,\cG_2) \cup \cG_2$ for $V(\cG_1)$ an $(a_1,b_1)$-absorbing set $\mathcal{A}_1$ and $V(\cG_2)$ an $(a_2,b_2)$-absorbing set $\mathcal{A}_2$ with $b_1, b_2 \geq 1$, $\mathcal{A}_1 \cap \mathcal{A}_2 = \emptyset$, and $\cP(\cG_1, \cG_2)$ a collection of paths between $\cG_1$ and $\cG_2$ intersecting $\cG_1$ and $\cG_2$ at variable nodes. Further suppose for every check node $c \in \cE_{\cA_i}, \cN_{\cG}(c)\subseteq \cA_i$. Then $\cA_1$ (equivalently, $\cA_2$) is failure inducing.
\end{theo} 

\begin{theo}\label{connected_via_a_tree}
\textbf{(Connected via a tree)} Consider a Tanner graph $\cG$ containing a subgraph $\cH = \cG_1 \cup \cT \cup \cG_2$ for $V(\cG_1)$ an $(a_1,b_1)$-absorbing set $\mathcal{A}_1$ and $V(\cG_2)$ an $(a_2,b_2)$-absorbing set $\mathcal{A}_2$ with $b_1, b_2 \geq 1$, $\mathcal{A}_1 \cap \mathcal{A}_2 = \emptyset$, and $\cT$ a tree whose leaves in $\cG_{\cT}$ are elements of $\cA_1$ and $\cA_2$. Further suppose that $|\mathcal{O}_{\cA_1}\cap T|=1$ and for every check node $c \in \cE_{\cA_i}, \cN_{\cG}(c) \subseteq \cA_i$. Then $\cA_1$ is failure inducing. 
\end{theo}

\begin{proof}
Since $\cA_1$ is in error, $\sigma_c =1$ for every $c \in \cO_{\cA_1}$. In particular, $\sigma_c =1$ for every $c \in \cO_{\cA_1}\cap T$.     

The decoding begins with all variable nodes sending $0$. Thus all nodes $c \in \cO_{\cA_1}$ send $1$ to their neighbors and the rest of the check nodes in $\cG$ send $0$. Since $\cA_1$ and $\cA_2$ are absorbing sets, the degree condition on the variable nodes results in all variable nodes in  $A_1 \cup A_2$ sending $0$. Next, since $\cN_{\cG}(c)\subseteq \cA_i$ for all $c \in \cE_{\cA_i}$, all check nodes in $\cE_{\cA_1}\cup \cE_{\cA_2}$ receive $0$ and hence send $0$. Consequently, again we have the check nodes $c \in \mathcal{O}_{\cA_1}$ sending $1$ and the check nodes in $\cE_{\cA_1}\cup\cE_{\cA_2}$ sending $0$. Hence in all decoding iterations, the variable nodes in $\cA_1 \cup \cA_2$ send 0.

Since $|\cO_{\cA_1} \cap T|=1$, all other check nodes in $T$ have input syndrome value $0$. Given that $T$ is acyclic and the variable nodes incident to $T$ and $A_1 \cup A_2$ always send $0$, $c$ always receives $0$ from its neighbors. Thus we have decoding failure.
\end{proof}

\begin{theo}\label{connected_via_multiple_trees}\textbf{(Connected via multiple trees)} Consider a Tanner graph $\cG$ containing a subgraph $\cH = A_1 \cup \cT(A_1,A_2) \cup A_2$ for $V(A_1)$ an $(a_1,b_1)$-absorbing set $\mathcal{A}_1$ and $V(A_2)$ an $(a_2,b_2)$-absorbing set $\mathcal{A}_2$ with $b_1, b_2 \geq 1$, $\mathcal{A}_1 \cap \mathcal{A}_2 = \emptyset$, and $\cT$ a collection of disjoint trees whose leaves in $\cG_{\cT}$ are elements of $\cA_1$ and $\cA_2$. Further suppose that for every tree $\cT \in \cT(\cA_1, \cA_2)$, $|\cO_{\cA_1}\cap \cT|=1$ and for every check node $c \in \cE_{\cA_i}, \cN_{\cG}(c) \subseteq \cA_i$. Then $\cA_1$ is failure inducing. 
\end{theo}

\begin{proof}
Since $\cA_1$ is in error, $\sigma_c =1$ for every $c \in \cO_{\cA_1}$. In particular, $\sigma_c =1$ for every $c \in \cO_{\cA_1}\cap \cT$.     

The decoding begins with all variable nodes sending $0$. Thus all nodes $c \in \cO_{\cA_1}$ send $1$ to their neighbors and the rest of the check nodes in $\cG$ send $0$. Since $\cA_1$ and $\cA_2$ are absorbing sets, the degree condition on the variable nodes results in all variable nodes in  $A_1 \cup A_2$ sending $0$. Next, since $\cN_{\cG}(c)\subseteq \cA_i$ for all $c \in \cE_{\cA_i}$, all check nodes in $\cE_{\cA_1}\cup \cE_{\cA_2}$ receive $0$ and hence send $0$. Consequently, we again have the check nodes $c \in \mathcal{O}_{\cA_1}$ sending $1$ and the check nodes in $\cE_{\cA_1}\cup\cE_{\cA_2}$ sending $0$. Hence in all decoding iterations, the variable nodes in $\cA_1 \cup \cA_2$ send 0.

Since $|\cO_{\cA_1} \cap \cT|=1$ for every $\cT \in \cT(\cA_1,\cA_2)$, all other check nodes in $\cT$ have input syndrome value $0$. Given that $\cT$ is acyclic and the variable nodes incident to $\cT$ and $\cG_1 \cup \cG_2$ always send $0$, each check node $c \in \cO_{\cA_1} \cap \cT$ always receives $0$ from its neighbors. Thus we have decoding failure.
\end{proof}

We further expand the class of harmful graphical substructures by considering partitions of multiple absorbing sets connected by acyclic components, as described in Theorem $\ref{absorbing_partition}$. 

\begin{theo}\label{absorbing_partition}
\textbf{(Partition of many absorbing sets)} 
Consider a Tanner graph $\cG$ containing a subgraph $\cH = \cG_1 \cup \cG_2, \dots, \cG_k \cup \cT(\cG_1,\cG_2, \dots \cup \cG_k)$ for $V(\cG_i)$ an $(a_i,b_i)$-absorbing set $\mathcal{A}_i$ with $b_i \geq 1$ for every $i \in [k]$ and $\cT(\cA_1, \dots, \cA_k)$ a collection of trees in $\cG$ whose leaves in $\cG_{\cT}$ are variable nodes in least two subgraphs in $\{\cG_1, \dots, \cG_k\}$. Further suppose $\cA_i \cap \cA_j = \emptyset$ for all $i, j \in [k]$. Finally suppose that for every tree $\cT \in \cT(\cG_1, \cG_2, \dots, \cG_k)$ incident to $\cA_1$, $|\cO_{\cA_1}\cap \cT|=1$ and for every check node $c \in \cE_{\cA_i}, \cN_{\cG}(c) \subseteq \cA_i$. Then $\cA_1$ is failure inducing. 
\end{theo}
\begin{proof}
Since $\cA_1$ is in error, $\sigma_c =1$ for every $c \in \cO_{\cA_1}$. Without loss of generality, consider $\cT \in  \cT(\cG_1, \cG_2, \dots, \cG_k)$ such that $\cT$ is incident to $\cA_1$ and $\cA_2$. Then in particular, $\sigma_c =1$ for $c \in \cO_{\cA_1}\cap \cT$.      

The decoding begins with all variable nodes sending $0$. Thus, all nodes $c \in \cO_{\cA_1}$ send $1$ to their neighbors and the rest of the check nodes in $\cG$ send $0$. Since $\cA_1$ and $\cA_2$ are absorbing sets, the degree condition on the variable nodes results in all variable nodes in  $\cA_1 \cup \cA_2$ sending $0$. Next, since $\cN_{\cG}(c)\subseteq \cA_i$ for all $c \in \cE_{\cA_i}$, all check nodes in $\cE_{\cA_1}\cup \cE_{\cA_2}$ receive $0$ and hence send $0$. Consequently, again we have the check nodes $c \in \mathcal{O}_{\cA_1}$ sending $1$ and the rest sending $0$. Hence in all decoding iterations, all nodes in $\cA_1 \cup \cA_2$ send 0.

Since $|\cO_{\cA_1} \cap \cT|=1$ for every tree $\cT \in \cT(\cG_1, \cG_2, \dots, \cG_k)$ incident to $\cG_1$, all other check nodes in $\cT$ have input syndrome value $0$. Given that $\cT$ is acyclic and the variable nodes incident to $\cT$ and $\cG_1 \cup \cG_2$ always send $0$, the check node $c \in \cO_{\cA_1} \cap \cT$ always receives $0$ from its neighbors. Thus, we have decoding failure.
\end{proof}

\subsection{Hypergraph-product Codes}
We end the section with an investigation of absorbing sets in a family of QLDPC codes known  as hypergraph-product codes.

\begin{defi}\cite{tillich2013quantum}
Given two classical linear codes $\cC_1$ and $\cC_2$ with corresponding parity check matrices $H_1 \in M_{r_1 \times n_1}(\mathbb{F}_2)$ and $H_2 \in M_{r_2 \times n_2}(\mathbb{F}_2)$, and Tanner graphs $\cG_1$ and $\cG_2$, the corresponding \textit{hypergraph-product code} $\cC'$ is a CSS code with block parity check matrices 

\[
\begin{aligned}
    H_{\sf X} &= (H_1 \otimes I_{n_2} \: \: \: I_{r_1}\otimes H_2^T)\\
    H_{\sf Z} &= (I_{n_1} \otimes H_2 \: \: \: H_1^T \otimes I_{r_2})\\
\end{aligned}
\]
where $I_{r_1}, I_{r_2}, I_{n_1}$, and $I_{n_2}$ are identity matrices of size $r_1, r_2, n_1$, and $n_2$, respectively. 

\end{defi}
\begin{rem}
Given the block structure of the stabilizers of hypergraph-product codes, it is easily seen that the absorbing sets in $G_{\sfX}$ and $G_\sfZ$ are completely characterized by the ``base" matrices $H_1, H_2$ and $H_1^T,H_2^T$ respectively.
Technically speaking, a disconnected union of absorbing sets from the base matrices will give rise to an absorbing set, but these can be discarded since they will not affect one another.
\end{rem}
\begin{exa}\label{exa-3noniso} 
Consider $H_1=H_2=H$ where $H$ is the following parity check matrix for a classical linear code $\cC$.

\[
\scriptsize
H=\begin{pmatrix}
1 & 1 & 0 & 0 & 0 & 0 & 0 & 0 & 0 & 0 & 0 \\
0 & 1 & 1 & 0 & 0 & 0 & 0 & 0 & 0 & 0 & 0 \\
0 & 0 & 1 & 1 & 0 & 0 & 0 & 0 & 0 & 0 & 0 \\
0 & 0 & 0 & 1 & 1 & 0 & 0 & 0 & 0 & 0 & 0 \\
0 & 0 & 0 & 0 & 1 & 1 & 0 & 0 & 0 & 0 & 0 \\
0 & 0 & 0 & 0 & 0 & 1 & 1 & 0 & 0 & 0 & 0 \\
0 & 0 & 0 & 0 & 0 & 0 & 1 & 1 & 0 & 0 & 0 \\
0 & 0 & 0 & 0 & 0 & 0 & 0 & 1 & 1 & 0 & 0 \\
0 & 0 & 0 & 0 & 0 & 0 & 0 & 0 & 1 & 1 & 0 \\
0 & 0 & 0 & 0 & 0 & 0 & 0 & 0 & 0 & 1 & 1 \\
1 & 0 & 0 & 0 & 0 & 0 & 0 & 0 & 0 & 0 & 1 \\
\hline
0 & 1 & 0 & 0 & 0 & 0 & 0 & 0 & 0 & 1 & 0 \\
0 & 0 & 1 & 0 & 0 & 1 & 0 & 0 & 0 & 0 & 0 \\
0 & 0 & 0 & 0 & 0 & 0 & 1 & 0 & 1 & 0 & 0 \\
\end{pmatrix}
\]

The corresponding Tanner graph $\cG$ is shown in Figure \ref{fig:tannergraph}. 

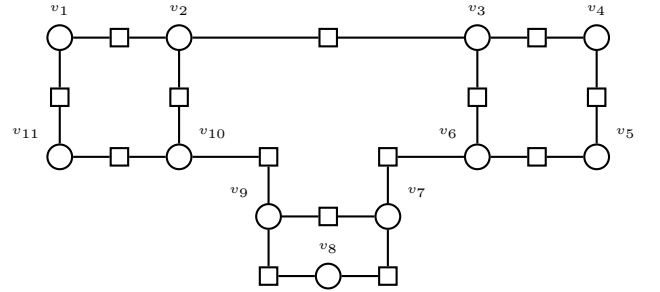
\begin{figure}[!h]
\centering 
\resizebox{0.5\textwidth}{!}{
    \begin{tikzpicture}[thick,scale=.8]

\node[circle,draw=black, label={above:{\tiny{$v_1$}}}] (v1) at (-4.5,3){};
\node[circle,draw=black, label={above:{\tiny$v_2$}}] (v2) at (-2.5,3){};
\node[circle,draw=black, label={above left:{\tiny$v_{11}$}}] (v3) at (-4.5,1){};
\node[circle,draw=black, label={above right:{\tiny$v_{10}$}}] (v4) at (-2.5,1){};

\node[circle,draw=black, label={above:{\tiny$v_3$}}] (v5) at (2.5,3){};
\node[circle,draw=black, label={above:{\tiny$v_4$}}] (v6) at (4.5,3){};
\node[circle,draw=black, label={above left:{\tiny$v_6$}}] (v7) at (2.5,1){};
\node[circle,draw=black, label={above right:{\tiny$v_5$}}] (v8) at (4.5,1){};

\node[circle,draw=black, label={above left:{\tiny$v_9$}}] (v9) at (-1,0){};
\node[circle,draw=black, label={above right:{\tiny$v_7$}}] (v10) at (1,0){};
\node[circle,draw=black, label={above:{\tiny$v_8$}}] (v11) at (0,-1){};

\node[shape=rectangle,draw=black] (c1) at (-3.5,3){};
\node[shape=rectangle,draw=black] (c2) at (-4.5,2){};
\node[shape=rectangle,draw=black] (c3) at (-2.5,2){};
\node[shape=rectangle,draw=black] (c4) at (-3.5,1){};

\node[shape=rectangle,draw=black] (c5) at (3.5,3){};
\node[shape=rectangle,draw=black] (c6) at (2.5,2){};
\node[shape=rectangle,draw=black] (c7) at (4.5,2){};
\node[shape=rectangle,draw=black] (c8) at (3.5,1){};

\node[shape=rectangle,draw=black] (c9) at (0,0){};
\node[shape=rectangle,draw=black] (c10) at (-1,-1){};
\node[shape=rectangle,draw=black] (c11) at (1,-1){};

\node[shape=rectangle,draw=black] (c12) at (0,3){};
\node[shape=rectangle,draw=black] (c13) at (-1,1){};
\node[shape=rectangle,draw=black] (c14) at (1,1){};

\path [-,thick] (v1) edge node[left] {} (c1); \path [-,thick] (c1) edge node[left] {} (v2);
\path [-,thick] (v3) edge node[left] {} (c4); \path [-,thick] (c4) edge node[left] {} (v4);
\path [-,thick] (v1) edge node[left] {} (c2); \path [-,thick] (c2) edge node[left] {} (v3);
\path [-,thick] (v2) edge node[left] {} (c3); \path [-,thick] (c3) edge node[left] {} (v4);

\path [-,thick] (v5) edge node[left] {} (c5); \path [-,thick] (c5) edge node[left] {} (v6);
\path [-,thick] (v7) edge node[left] {} (c8); \path [-,thick] (c8) edge node[left] {} (v8);
\path [-,thick] (v5) edge node[left] {} (c6); \path [-,thick] (c6) edge node[left] {} (v7);
\path [-,thick] (v6) edge node[left] {} (c7); \path [-,thick] (c7) edge node[left] {} (v8);

\path [-,thick] (v9) edge node[left] {} (c9); \path [-,thick] (c9) edge node[left] {} (v10);
\path [-,thick] (v9) edge node[left] {} (c10); \path [-,thick] (c10) edge node[left] {} (v11);
\path [-,thick] (v10) edge node[left] {} (c11); \path [-,thick] (c11) edge node[left] {} (v11);

\path [-,thick] (v2) edge node[left] {} (c12); \path [-,thick] (c12) edge node[left] {} (v5);
\path [-,thick] (v4) edge node[left] {} (c13); \path [-,thick] (c13) edge node[left] {} (v9);
\path [-,thick] (v7) edge node[left] {} (c14); \path [-,thick] (c14) edge node[left] {} (v10);



\end{tikzpicture}
         }
\vspace{-.05 in}\caption{Tanner graph $\cG$ induced by the matrix $H$}
\label{fig:tannergraph}
\end{figure}

All of the variable nodes for $\cG$ form an $(11,0)$-absorbing set. Additionally, $\cG$ contains two $(4,2)$-absorbing sets $\{v_1, v_2, v_{10}, v_{11}\}$ and $\{v_3, v_4, v_5, v_6\}$, and one $(3,2)$-absorbing set $\{v_7, v_8, v_9\}$. Observe that the variable nodes of $\cG$ partition into an odd number of absorbing sets, not all of whom are isomorphic. The decoding performance of these absorbing sets are described by Theorem \ref{all_even_checks} in the case when all variable nodes are in error. These smaller subsets cause decoding failure for $\cC$ as described in Theorem \ref{absorbing_partition}. 

These harmful substructures extend to the resulting hypergraph-product code $\cC'$. The parity check matrices for $\mathcal{C'}$ are 

\[
\begin{aligned}
    H_{\sf X} &= (H \otimes I_{11} \: \: \:  I_{14}\otimes H^T)\\
    H_{\sf Z} &= (I_{11} \otimes H \: \: \: H^T \otimes I_{14}).\\
\end{aligned}
\]

The dimensions for $H_{\sf X}$ and $H_{\sf Z}$ are both $(154,317)$. Since we are ignoring the correlation between $\sf X$ and $\sf Z$ errors, for this example we just consider error patterns for $\sf X$. 

Let $\mathcal{G}_{\sf X}$ be the Tanner graph corresponding to $H_{\sf X}$ and let $\mathcal{A}$ denote the set of all variable nodes in $\mathcal{G}_{\sf X}$. We note that $\mathcal{A}$ contains a $(121,0)$-absorbing set $\cA_1$ resulting from the variable nodes in the $H \otimes I_{11}$ block since all check nodes in this block are of degree two. Therefore, the decoding performance of $\cA_1$ is given by Theorem $\ref{all_even_checks}$. Since $\vec{1}$ is not in the rowspace of $H \otimes I_{11}$, $\cA_1$ causes a logical error in decoding, implying $\cA$ is a trapping set. 

There are eleven isomorphic copies of $\cG$ contained in $\cG_{\sf X}$. Thus, $\cG_{\sf X}$ contains at least eleven $(11,0)$-absorbing sets, twenty-eight $(4,2)$-absorbing sets, and eleven $(3,2)$-absorbing sets which are all guaranteed to cause decoding failure for $\cC'$. In this way we see that our results on the presence of partitions of absorbing sets in the Tanner graph representation of a code imply the presence of harmful weight $3$ and $4$ error patterns for this length $317$ code. 
\hfill $\Box$
\end{exa}

\section{Concluding Observations}

This paper takes a first step towards relating classical absorbing sets  to trapping sets and failure inducing sets of QLDPC codes. Our results show that almost all absorbing set graphs have failure inducing sets and therefore are trapping sets. It remains open to show that any absorbing set will always be a trapping set under this decoder. Our results also demonstrate that many $(a,0)$-absorbing sets form trapping sets when embedded within larger Tanner graphs, especially when they are comprised of certain partitions of smaller absorbing sets. We also situated certain types of symmetric stabilizers, graph structures already identified as being harmful in QLDPC decoding, within a larger class of absorbing set partitions that form failure inducing sets.

Given this identification of the impact of absorbing sets on decoding performance for QLDPC codes, a next step is to understand what can be done to improve code design to mitigate the presence of such harmful graphical substructures. 
For the case of hypergraph-product codes, the $\sfX$-errors and the $\sfZ$-errors are determined both by the base matrices $H_1$, $H_2$ and by the transposed matrices $H_1^T,H_2^T$.
The transposition of course swaps variable nodes and check nodes, and we are currently exploring possible structural connections between the absorbing sets of $\sfX$- and $\sfZ$-errors.
We are also interested in classifying absorbing sets of structured QLDPC codes such as hypergraph-product codes with circulant base matrices.

It would also be interesting to explore how much of this characterization extends to other types of QLDPC decoders, such as Black-Grey-Flip or Belief Propagation. If these characterizations do not extend to these other decoders, we would like to explore what types of graphical characterizations might exist in these cases. Finally, we remark that characterizing failure inducing as in Example~\ref{fig:Fig4NV21}(c) sets that are not absorbing sets remains open.

\bibliographystyle{IEEEtran}
\bibliography{qabs.bib}
\end{document}